\documentclass[10pt]{article}
\usepackage[utf8]{inputenc}
\usepackage[margin=1in]{geometry}
\usepackage{rotating}
\usepackage{booktabs}
\usepackage{array}
\usepackage{amsmath, amsfonts}
\usepackage{amssymb}
\DeclareMathOperator*{\argmax}{arg\,max}
\usepackage{color}
\usepackage{hyperref}
\usepackage{bbm}
\usepackage{algorithm}
\usepackage[noend]{algpseudocode}
\usepackage{subcaption}
\captionsetup{justification   = raggedright,
              singlelinecheck = false}
\usepackage{graphicx,nicefrac}
\usepackage{tikz}
\usepackage{adjustbox}
\usepackage{framed}
\usepackage[most]{tcolorbox}

\usepackage{mathtools}
\mathtoolsset{showonlyrefs}

\usepackage[
    backend=biber,
    style=ieee,
    citestyle=numeric-comp,
    sorting=none,
    giveninits=true,
    natbib,
    hyperref,
    maxbibnames=99,
    doi=true,isbn=false,url=true,eprint=false
]{biblatex}

\addbibresource{bibliography.bib}

\usepackage{amsthm}
\usepackage[normalem]{ulem}
\usepackage{soul}
\usepackage{tabularx}
\newtheorem{theorem}{Theorem} 
\newtheorem{lemma}{Lemma} 
\newtheorem{prop}{Proposition}
\newtheorem{cor}{Corollary}
\newtheorem{definition}{Definition}

\usepackage{chngcntr}
\usepackage{apptools}
\AtAppendix{\counterwithin{theorem}{section}}
\AtAppendix{\counterwithin{prop}{section}}

\usepackage{mathtools}

\newcommand{\D}{\mathcal{D}}

\newcommand{\hightrain}{\D_{h,\text{train}}}
\newcommand{\lowtrain}{\D_{l,\text{train}}}
\newcommand{\target}{\D_{\text{target}}}


\setlength{\parskip}{0.5em}


\usepackage[]{color-edits}
\addauthor{mw}{magenta}

\makeatletter
\def\blfootnote{\xdef\@thefnmark{}\@footnotetext}
\makeatother

\title{Defection-Free Collaboration between Competitors \\ in a Learning System}

\begin{document}
\author{Mariel Werner$^1$ \and Sai Praneeth Karimireddy$^1$ \and Michael Jordan$^1$ }
\date{
$^1$Department of Electrical Engineering and Computer Sciences, \\ University of California, Berkeley}
\maketitle
\begin{abstract}
We study collaborative learning systems in which the participants are competitors who will defect from the system if they lose revenue by collaborating. As such, we frame the system as a duopoly of competitive firms who are each engaged in training machine-learning models and selling their predictions to a market of consumers. We first examine a fully collaborative scheme in which both firms share their models with each other and show that this leads to a market collapse with the revenues of both firms going to zero. We next show that one-sided collaboration in which only the firm with the lower-quality model shares improves the revenue of both firms. Finally, we propose a more equitable, \emph{defection-free} scheme in which both firms share with each other while losing no revenue, and we show that our algorithm converges to the Nash bargaining solution. 
\end{abstract}
\section{Introduction}
When the guarantees of a collaborative learning system are misaligned with the objectives of the learners, it can disincentivize participation and cause the participants to defect. Recent work \citep{donahue2021federated,blum2021OneAll,wu2022collaborate} examines the incentives that clients have to participate in or defect from a collaborative learning system. Such misalignment of incentives can arise in a number of ways. For example, \cite{karimireddy2022mechanisms} show that some clients might \emph{free-ride}, burdening other participants in the network with all the training work while contributing nothing. \citep{marfoq2021mixture,smith2021ditto,werner2023provably,ghosh2020efficient,mansour2020,sattler2019} show that if there is heterogeneity across clients' data distributions the global model returned by standard collaborative learning protocols might perform poorly for individual clients. To address the misalignment problem, \cite{han2023defections} propose an algorithm whose model updates guarantee that client losses degrade sufficiently from step to step to ensure that no client defects (albeit at some cost to the accuracy of the final global model).

In this paper, we take an economics-based view of the alignment problem, framing client \emph{utility/revenue} as the determining factor in defection. We frame clients as competitive firms who are selling their models' predictions to consumers and competing for market share. As in the standard collaborative learning protocol, the firms collaboratively train a global model, but if at any point in the process their revenue decreases, they defect from participation.

\paragraph{Motivating Example.} Consider two autonomous vehicle companies training self-driving models, each with initial access only to their own data. Further, suppose their individual data does not fully reflect the distribution on which the models must perform well at test time. For example, one company might have a lot of urban data and very little rural data and the other company the opposite. Clearly, if these companies combined their models, they could offer safer and better cars to consumers. However, by collaborating they might also lose their competitive advantage in the market, disincentivizing them from participating. Our objective is to design a collaboration scheme such that neither firm loses revenue, thus incentivizing participation.

\paragraph{Our Contributions.}  
We frame the collaborative learning system as a duopoly of competitive firms whose conditions for joining the system are to improve (or at least not lose) revenue, and we show that collaboration is possible under such conditions:
\begin{enumerate}
    \item We first show surprising outcomes of two possible collaboration schemes. When both firms contribute fully to the collaboration scheme, their model qualities improve maximally but their revenues go to zero. When only the low-quality firm contributes to the collaboration scheme, the model qualities and revenues of both firms improve.  
    \item We next design a defection-free algorithm which allows \emph{both} firms to contribute to the collaborative system without losing revenue at any step.
    \item We show that, except in trivial cases, our algorithm converges to the Nash bargaining solution. This is a significant result because we show that even when both firms focus myopically on improving their own revenues, a solution is reached that maximizes the joint surplus of the firms. 
\end{enumerate}

\subsection{Related Work}
Collaborative learning allows multiple clients to participate in the training of a global model without transmitting raw data \citep{mcmahan2017communication}. In this paper, we characterize the participants in a collaborative learning system as market competitors who will defect from collaboration if they lose revenue by participating. Competitive behavior of firms in markets is a well-established field of study in economics (see \cite{tirole1988industrial} for an overview). Particularly relevant to our work is competition in oligopolies \citep{cournot1838economics}. As in \cite{huang2023duopoly}, we structure our problem as a duopoly of competitive firms. Rather than focusing on incentivization of collaboration via revenue sharing between the firms, however, we study mechanisms that guarantee no revenue loss as we do in this paper. Another relevant line of work is that of \cite{tsoy2023datasharing}, who parameterize the data-sharing problem in terms of competition type (Bertrand \citep{bertrand1883competition} or Cournot \citep{cournot1838economics}), the number of data points each firm has, and the difficulty of the learning task, and give conditions on these parameters under which collaboration is profitable. As we do, they analyze various data-sharing schemes, such as full versus partial collaboration, and propose Nash bargaining \citep{nash1950bargaining} as a strategy for partial collaboration. Our work goes further in that we propose a federated optimization algorithm for reaching the Nash bargaining solution, guaranteeing no defections. 

\section{Collaborative Learning in an Oligopoly}

We frame the collaborative learning system as a duopoly (i.e., two firms) for simplicity, but all results can be extended to an oligopoly of more than two firms. 
 
Our setup is the following. Each firm possesses a model whose qualities are initially differentiated by classification accuracy on a target dataset. That is, one firm's model has low accuracy and the other firm's model has high accuracy on the target dataset. The consumers care about performance on the target distribution, which is different from the firms' individual data distributions. For example, in the autonomous vehicle example above, the target distribution would represent a variety of geographical locations, traffic instances, times of day/night, etc., while the individual distributions would not. Additionally we assume that the firms' individual distributions are complementary, so their combined data is distributed as the target distribution,  motivating the benefit of collaboration. Finally, we assume that, prior to collaboration, one firm has better initial model quality than the other (e.g., they have more training resources).  

A consumer has one of three options: 1) pay a higher price for the high-quality firm's model, 2) pay a lower price for the low-quality firm's model, or 3) buy neither model. We assume that all consumers would prefer the higher-quality model if the prices of both models were the same---that is, the firms' models are \emph{vertically differentiated}.
Consumers would be happiest if both firms collaborated fully since this would give them two maximally good models to choose from, but the initially high-quality firm would have sacrificed revenue in this scenario (we show this formally in Section \ref{sec:collaboration schemes}), causing it to defect. Based on this, our motivating question is: can we incentivize firms to join the collaboration scheme, thus benefiting consumers, while giving them no reason to defect due to revenue loss at any stage of the training process? We answer this question affirmatively. 

\subsection{Model}\label{subsec:duopoly model}
\subsubsection{Notation and assumptions}
\begin{enumerate} 
    \item A consumer's type corresponds to how much they value quality of prediction. We assume that consumer types are uniformly distributed on $\Theta = [0, 1]$, where consumer type $\theta=0$ places no value on quality and consumer type $\theta=1$ places maximal value on quality.
    \item We denote the low-quality firm's loss on its dataset with model parameters $x \in \mathcal{X}$ as $f(x;l) \in [0,1]$ and the high-quality firm's loss on its dataset as $f(x;h) \in [0,1]$. In the collaborative learning process, both firms want to solve the optimization problem
     \begin{align}\label{eq:shared objective}
        x^*=
        \arg\min_{x \in \mathcal{X}} 
        f(x), 
        &\phantom{{}=1} \text{where }
        f(x) \overset{\text{def}}{=} \frac{f(x;l)+f(x;h)}{2}.
     \end{align}
     That is, each firm wants to find the model which has minimal average loss across both firms' datasets. When the objective \eqref{eq:shared objective} is evaluated at the firms' models $x_l$ and $x_h$, we use the shorthand notation 
    \begin{align}
        f_l 
        &\overset{\text{def}}{=}
        \frac{f(x_l;l)+f(x_l;h)}{2},
        &\phantom{{}=1}
        f_h
        &\overset{\text{def}}{=}
        \frac{f(x_h;l)+f(x_h;h)}{2}.
    \end{align}
    Finally, we define model qualities $q(x)\overset{\text{def}}{=}1-f(x)$, $q_l\overset{\text{def}}{=}1-f_l$ and $q_h\overset{\text{def}}{=}1-f_h$.
    \item Consumers pay prices $p_{\nicefrac{l}{h}} \in [0,\infty)$ for the low/high-quality firm's model $x_{\nicefrac{l}{h}}$, where $p_l \leq p_h$.
\end{enumerate}
\subsubsection{Equilibrium quantities}
The following definition gives the consumer's utility.
\begin{definition}\label{def:consumer utility}[Consumer Utility] A type-$\theta$ consumer has utility
\begin{align}\label{eq:consumer utility}
U_c(\theta) = 
\begin{cases} 
\theta q_h - p_h & \textnormal{if it buys high-quality firm's model,} \\
\theta q_l - p_l & \textnormal{if it buys low-quality firm's model,} \\
0 & \textnormal{if it buys neither model}.
\end{cases}
\end{align}
\end{definition}
The consumer utilities in Definition \ref{def:consumer utility} induce the following demands for the firms.
\begin{lemma}[Consumer Demands]\label{lemma:consumer demands}
Given the utilities in Definition \ref{def:consumer utility}, 
\begin{enumerate}
    \item consumer demand for the low-quality firm is $D_l = \frac{p_h-p_l}{q_h-q_l} - \frac{p_l}{q_l}$, and
    \item consumer demand for the high-quality firm is $D_h = 1 - \frac{p_h-p_l}{q_h-q_l}$.
\end{enumerate}
\end{lemma}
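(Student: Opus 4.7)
The strategy is the standard vertical-differentiation argument: partition the type space $[0,1]$ into three consecutive intervals according to which of the three options in Definition~\ref{def:consumer utility} is utility-maximizing, then use the uniform distribution on $\Theta$ to read off demands as interval lengths.

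First I would find the indifference thresholds. A type-$\theta$ consumer prefers the high-quality model to the low-quality one iff $\theta q_h - p_h \ge \theta q_l - p_l$, which (using $q_h > q_l$) rearranges to $\theta \ge \theta_{hl} \overset{\text{def}}{=} \frac{p_h - p_l}{q_h - q_l}$. Similarly, the low-quality model is preferred to buying nothing iff $\theta q_l - p_l \ge 0$, i.e.\ $\theta \ge \theta_{l0} \overset{\text{def}}{=} \frac{p_l}{q_l}$, and analogously $\theta_{h0} \overset{\text{def}}{=} \frac{p_h}{q_h}$ for the high-quality model versus nothing.

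Next I would verify the ordering $\theta_{l0} \le \theta_{hl}$ in the nontrivial regime where both firms obtain positive demand (otherwise one firm is priced out and the claimed formula $D_l = \theta_{hl} - \theta_{l0}$ would become negative, signaling a corner case that can be treated separately). Under this ordering, a short monotonicity check in $\theta$ shows that $[0, \theta_{l0}]$ consists of types buying nothing, $[\theta_{l0}, \theta_{hl}]$ of types buying the low-quality model, and $[\theta_{hl}, 1]$ of types buying the high-quality model. The key observation is that once $\theta$ crosses $\theta_{hl}$ the high-quality model strictly dominates the low-quality one, and once $\theta$ crosses $\theta_{l0}$ the low-quality model strictly dominates the outside option; combining these gives the three-interval partition without gaps or overlaps.

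Finally, since $\theta$ is uniform on $[0,1]$, the demand for each firm equals the length of its corresponding interval, yielding $D_l = \theta_{hl} - \theta_{l0} = \frac{p_h - p_l}{q_h - q_l} - \frac{p_l}{q_l}$ and $D_h = 1 - \theta_{hl} = 1 - \frac{p_h - p_l}{q_h - q_l}$, exactly as claimed. The main obstacle, more bookkeeping than mathematical, is making the case split clean: one should state upfront that the formulas hold in the interior regime $0 \le \theta_{l0} \le \theta_{hl} \le 1$, and note that boundary/corner cases (e.g.\ $p_l$ so high that $\theta_{l0} > \theta_{hl}$, meaning no consumer buys the low-quality model) would simply collapse one of the intervals and can be handled by the same indifference argument.
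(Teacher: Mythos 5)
Your proposal is correct and follows essentially the same argument as the paper's proof: identify the two indifference thresholds $\hat{\theta}_l = p_l/q_l$ and $\hat{\theta}_h = (p_h-p_l)/(q_h-q_l)$, partition the uniform type space accordingly, and read off demands as interval lengths. Your additional care about the ordering $\theta_{l0}\le\theta_{hl}$ and the corner cases is a reasonable refinement that the paper's proof leaves implicit.
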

\begin{proof}
    See Appendix \ref{proof:consumer demands lemma}.
\end{proof}

Using the consumer demands in Lemma \ref{lemma:consumer demands}, we can define the utilities of the firms. 
\begin{definition}\label{def:firm utility}[Firm Utility/Revenue]
The low/high firm's utility/revenue from selling its model is
\begin{align}\label{eq:firm utilities}
    U_{\nicefrac{l}{h}}(q_l, q_h, p_l, p_h) 
    &= 
    p_{\nicefrac{l}{h}}D_{\nicefrac{l}{h}}.
\end{align}
\end{definition}

At equilibrium, the firms will set prices $p_l$ and $p_h$ that maximize \eqref{eq:firm utilities}, yielding price-optimal utilities. 

\begin{lemma}[Equilibrium Prices and Utilities]\label{lemma:firm's price-optimal utility}
The optimal prices for the low and high firms are
\begin{align}\label{eq:optimal price low firm}
    p_l^* 
    = 
    \frac{q_l(q_h-q_l)}{4q_h-q_l},
    &\phantom{{}=1}
    p_h^* 
    = 
    \frac{2q_h(q_h-q_l)}{4q_h-q_l},
\end{align}
yielding price-optimal utilities
\begin{align}\label{eq:price-optimal utilities}
    U_l(q_l, q_h, p_l^*, p_h^*) 
    = \frac{q_lq_h(q_h-q_l)}{(4q_h-q_l)^2},
    \phantom{{}=1}
    U_h(q_l, q_h, p_l^*, p_h^*) = \frac{4q_h^2(q_h-q_l)}{(4q_h-q_l)^2}.
\end{align}
\end{lemma}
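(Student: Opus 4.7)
The proof is a standard best-response-and-solve calculation for a Nash equilibrium in prices, so my plan is to derive the two firms' first-order conditions, check that they genuinely yield maxima, and then solve the resulting linear system.

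First I would write down the utilities explicitly using Lemma \ref{lemma:consumer demands}:
\begin{align}
U_l(p_l, p_h) &= p_l\left(\frac{p_h - p_l}{q_h - q_l} - \frac{p_l}{q_l}\right), &
U_h(p_l, p_h) &= p_h\left(1 - \frac{p_h - p_l}{q_h - q_l}\right).
\end{align}
Each firm picks its own price holding the other's fixed, so the equilibrium is characterized by $\partial U_l/\partial p_l = 0$ and $\partial U_h/\partial p_h = 0$. Before solving, I would note that $\partial^2 U_l/\partial p_l^2 = -2/(q_h-q_l) - 2/q_l < 0$ and $\partial^2 U_h/\partial p_h^2 = -2/(q_h-q_l) < 0$ (assuming $q_h > q_l > 0$, which is the non-degenerate case), so each utility is strictly concave in its own argument and the stationary point is a unique best response.

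Next I would compute the two best-response relations. Differentiating $U_l$ in $p_l$ and setting the derivative to zero gives, after rearranging, $p_l = p_h q_l/(2q_h)$; differentiating $U_h$ in $p_h$ yields $p_h = (q_h - q_l + p_l)/2$. Substituting the first relation into the second and solving for $p_h$ produces $p_h(4q_h - q_l) = 2q_h(q_h - q_l)$, i.e.\ the claimed $p_h^* = 2q_h(q_h-q_l)/(4q_h - q_l)$, and plugging this back into the first relation gives $p_l^* = q_l(q_h-q_l)/(4q_h - q_l)$.

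Finally I would substitute $(p_l^*, p_h^*)$ into the demand expressions from Lemma \ref{lemma:consumer demands} to compute $D_l^* = q_h/(4q_h - q_l)$ and $D_h^* = 2q_h/(4q_h - q_l)$, and then multiply by the corresponding prices to obtain the equilibrium utilities in \eqref{eq:price-optimal utilities}. There is no real conceptual obstacle here; the only thing to be careful about is algebraic bookkeeping in solving the coupled first-order conditions and simplifying the products $p_{\nicefrac{l}{h}}^* D_{\nicefrac{l}{h}}^*$ to the stated closed form. I would also briefly remark that the equilibrium is in the interior of the feasible region (both demands are strictly positive when $q_h > q_l > 0$), so the unconstrained first-order analysis is indeed valid.
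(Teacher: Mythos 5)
Your proposal is correct and follows essentially the same route as the paper's proof: derive each firm's first-order condition from the demand expressions in Lemma \ref{lemma:consumer demands}, solve the coupled best responses $p_l = p_h q_l/(2q_h)$ and $p_h = (p_l + q_h - q_l)/2$, and substitute back to get the equilibrium utilities. Your added checks of strict concavity in each firm's own price and of interiority are sound and slightly more careful than the paper, which states the first-order conditions without verifying the second-order conditions.
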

\begin{proof}
    See Appendix \ref{proof:firm's price-optimal utility lemma}.
\end{proof}
Going forward, we will use the shorthand $U_{\nicefrac{l}{h}}\overset{\text{def}}{=}U_{\nicefrac{l}{h}}(q_l,q_h,p_l^*,p_h^*)$.

The following proposition states how the firms' utilities vary with quality and is key in our analysis.
\begin{prop}[Relationship between utilities and qualities]\label{prop:relationship between U and q}
For $q_l \leq q_h$,
\begin{enumerate}
    \item $U_h$ is increasing in $q_h$,
    \item $U_h$ is decreasing in $q_l$,
    \item $U_l$ is increasing in $q_h$, and
    \item $U_l$ is increasing in $q_l$ for $q_l \leq \frac{4}{7}q_h$ and decreasing in $q_l$ otherwise. 
\end{enumerate}
\end{prop}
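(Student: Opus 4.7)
The strategy is straightforward: each claim is a sign statement about a partial derivative of a rational function, so the plan is to differentiate $U_l$ and $U_h$ with respect to the relevant argument, factor aggressively to extract the common factor $(4q_h-q_l)$ that the chain rule produces, and then read off the sign from what remains. The denominator $(4q_h-q_l)^2$ is strictly positive since $q_l \le q_h \le 1$ implies $4q_h - q_l \ge 3q_h > 0$ (we may assume $q_h > 0$; otherwise both utilities vanish and there is nothing to prove), so after clearing the denominator the sign of each derivative is determined by a low-degree polynomial in $q_l, q_h$.

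For claim (1), I would compute $\partial U_h/\partial q_h$ using the quotient rule. Pulling the common factor $(4q_h-q_l)$ out of numerator and denominator leaves, in the numerator, a quadratic of the form $4q_h^2 - 3q_hq_l + 2q_l^2$ (up to a positive multiplicative factor). Viewed as a quadratic in $q_h$, its discriminant $9q_l^2 - 32q_l^2$ is negative, so this quadratic is strictly positive and the derivative is positive. Claim (2) is cleaner: $\partial U_h/\partial q_l$ reduces after factoring to $-4q_h^2(2q_h+q_l)/(4q_h-q_l)^3$, which is manifestly negative. Claim (3) is similar: $\partial U_l/\partial q_h$ simplifies to $q_l^2(2q_h+q_l)/(4q_h-q_l)^3$, which is nonnegative and strictly positive when $q_l > 0$.

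Claim (4) is the one that actually determines a threshold and is therefore the main calculation. After applying the quotient rule to $U_l = q_lq_h(q_h-q_l)/(4q_h-q_l)^2$ and factoring, I expect the numerator to reduce to $q_h^2(4q_h - 7q_l)$, so
\begin{equation}
\frac{\partial U_l}{\partial q_l} \;=\; \frac{q_h^2(4q_h - 7q_l)}{(4q_h - q_l)^3}.
\end{equation}
The denominator is positive, so the sign is governed entirely by $4q_h - 7q_l$, which is nonnegative iff $q_l \le \tfrac{4}{7}q_h$. This is exactly the threshold stated in the proposition.

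The only real obstacle is bookkeeping in the algebra of claims (1) and (4): one must carefully expand products like $(3q_h-2q_l)(4q_h-q_l)$ and $(q_h-2q_l)(4q_h-q_l)$, combine with the $-2(q_h-q_l)$ term from the derivative of the denominator, and collect terms so that the common factor $(4q_h-q_l)$ cancels cleanly. Once the factorizations above are in hand, each sign claim is immediate from $0 \le q_l \le q_h \le 1$, and no boundary or degenerate-case analysis beyond $q_h > 0$ is needed.
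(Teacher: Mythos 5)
Your proposal is correct and follows essentially the same route as the paper: compute each partial derivative, factor so the sign is governed by $q_h^2(4q_h-7q_l)$, $-4q_h^2(2q_h+q_l)$, $q_l^2(2q_h+q_l)$, and $4q_h(4q_h^2-3q_hq_l+2q_l^2)$ over the positive denominator $(4q_h-q_l)^3$, and read off the signs. Your discriminant check $9q_l^2-32q_l^2<0$ for the quadratic in claim (1) is a small justification the paper leaves implicit, but the argument is otherwise identical.
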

\begin{proof}
    See Appendix \ref{proof:relationship between U and q}
\end{proof}

In the next section, we examine various collaboration schemes between the firms and observe the impact on their revenues and model qualities.
\section{Collaboration Schemes}\label{sec:collaboration schemes}
To motivate our method, we describe two potential collaboration schemes between competitors that have sub-optimal and non-intuitive outcomes.
\paragraph{Sharing protocol.} As in standard federated learning protocols, we do not assume that the firms transmit their raw data to each other. Instead, firm A shares with firm B by evaluating the loss of firm B's model on firm A's data. Then firm A shares with firm B the loss, or the gradient of the loss, which allows firm B to optimize the objective \eqref{eq:shared objective}. These exchanges can happen either directly between the firms are through a trusted central coordinator.
\subsection{Notation and assumptions}
\begin{enumerate}
    \item $f(x;\nicefrac{l}{h})$ is convex and $L$-smooth in $x$. 
    \item We use $q_{\nicefrac{l}{h},t}$ and $f_{\nicefrac{l}{h},t}$ to refer to the firms' objectives when the model parameters are $x_{\nicefrac{l}{h},t}$, i.e., the model parameters at round $t$ of optimization.
    \item We define $\rho_t = \frac{q_{l,t}}{q_{h,t}}$, the ratio of the firms' model qualities at round $t$ of optimization.
    \item We assume model qualities can only improve or stay the same, not degrade.
\end{enumerate}

\subsection{Complete collaboration}


In this arrangement, both firms fully collaborate, sharing their models with each other and therefore obtaining identical-quality models. (Note that this algorithm is just FedAvg \citep{mcmahan2017communication}.) While this collaboration scheme is optimal for the consumer, giving them the choice of two maximally high-quality models, it drives both firms' utilities to zero. With identical-quality models, each firm will continually undercut the other's price by small amounts to capture the entire market share, eventually reaching equilibrium prices $p_l=p_h=0$.
\begin{lemma}[Firm revenues under Complete Collaboration]\label{lemma:complete sharing utilities}
    Under Complete Collaboration, the firms' equilibrium utilities are $U_l = U_h = 0$.
\end{lemma}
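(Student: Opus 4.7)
The plan is to reduce Complete Collaboration to Bertrand competition with homogeneous goods and then invoke the standard undercutting argument. First, I would observe that because both firms share their models fully at every round, at the end of the collaboration protocol they hold parameters yielding the same quality, $q_l = q_h =: q$. From the consumer's perspective, the two products are then perfect substitutes at common quality $q$, and a type-$\theta$ consumer buys from whichever firm charges the strictly lower price (breaking ties arbitrarily), buying at all whenever $\theta q \geq \min(p_l,p_h)$.

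Next, I would argue that the only Nash equilibrium in prices is $p_l = p_h = 0$. If $p_l \neq p_h$, the firm charging the higher price attracts zero demand and can profitably deviate by undercutting its opponent to a price slightly below the opponent's price (as long as that price is below $q$, which is itself forced by the opponent's best response, since attracting zero demand is strictly worse than charging a slightly positive undercut). If $p_l = p_h = p > 0$, either firm can cut its price by an arbitrarily small $\varepsilon > 0$, going from serving roughly half of the market to serving essentially the entire market, which strictly increases its revenue. Hence no strictly positive common price can be sustained as an equilibrium. The unique remaining candidate $p_l = p_h = 0$ yields zero revenue to both firms, so $U_l = U_h = 0$.

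The main subtlety is that the closed-form expressions in Lemma \ref{lemma:firm's price-optimal utility} are derived under strict vertical differentiation $q_l < q_h$, and the demand formula in Lemma \ref{lemma:consumer demands} has $q_h - q_l$ in a denominator; I therefore would not plug $q_l = q_h$ into those formulas but rather redo the best-response analysis from scratch in the symmetric case, as sketched above. As a consistency check, the numerators of both expressions in \eqref{eq:price-optimal utilities} carry an explicit factor $(q_h - q_l)$, so the one-sided limit $q_l \uparrow q_h$ gives zero on each side, matching the Bertrand conclusion.
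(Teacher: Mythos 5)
Your proof is correct and takes essentially the same route as the paper, which justifies this lemma only informally in the text via the same Bertrand undercutting argument (identical qualities force mutual price undercutting down to $p_l = p_h = 0$). Your additional care in not substituting $q_l = q_h$ into the formulas of Lemma \ref{lemma:firm's price-optimal utility} (whose derivation assumes $q_l < q_h$) and instead redoing the symmetric best-response analysis is a welcome refinement, but the underlying argument is the one the paper intends.
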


\begin{figure}[t!]
    \centering
    \begin{subfigure}{\textwidth}
        \centering
        \includegraphics[height=1.5in]{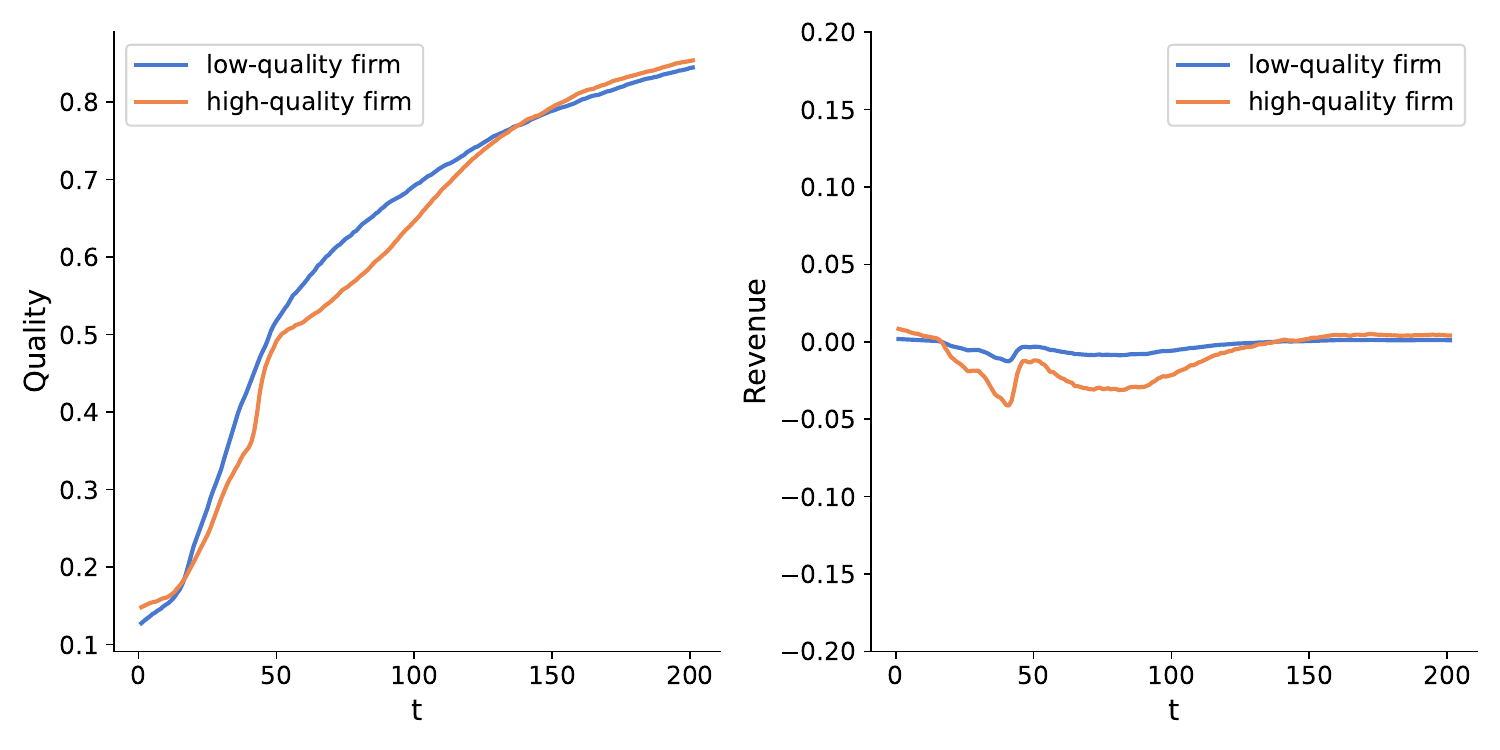}
        \label{subfig:mnist complete sharing}
    \end{subfigure}
    \caption{Performance of Complete Sharing scheme on MNIST. When both firms share with each other, their models converge to the same utility, driving their utilities to zero.}\label{fig:complete sharing utilities}
\end{figure}
Figure \ref{fig:complete sharing utilities} shows that when both firms' qualities increase freely in a Complete Collaboration scheme, their qualities both improve maximally, benefiting the consumer, but their utilities are driven to zero. Therefore, both firms have cause to defect from this collaboration scheme.
\subsection{One-sided collaboration}\label{subsec:one-sided collaboration}


In One-sided Collaboration, one firm shares its model while the other doesn't. There are two possibilities.
\paragraph{Only high-quality firm shares.} From Proposition \ref{prop:relationship between U and q}, the high-quality firm's revenue increases in $q_h$ but decreases in $q_l$. Therefore, if the quality of $x_h$ does not increase sufficiently to compensate for the increase in quality of $x_l$, the high-quality firm will lose revenue, causing it to defect. (In the proof of Proposition \ref{prop:non-decreasing utilities}, we give this increase-threshold precisely.) In our problem setup, the individual firms' data distributions are different than target distribution on which the qualities of their models are evaluated. Therefore, if the low-quality firm benefits from the high-quality firm's model, its performance on the target distribution will outpace the high-quality firm, which is limited to training on its own data. Figure \ref{subfig:mnist one-sided sharing high shares} gives an example of this outcome. Due to collaboration, the low-quality firm's model out-performs the high-quality firm's model, causing the high-quality firm's revenue to decrease. 
\paragraph{Only low-quality firm shares.}
From Proposition \ref{prop:relationship between U and q}, both firms' utilities increase in $q_h$. Therefore, both firms will increase their revenue if the low-quality firm shares its model with the high-quality firm. Figure \ref{subfig:mnist one-sided sharing low shares} depicts the outcome of this collaboration scheme. Over time, both firms' revenues increase. While this arrangement is defection-free, the low-quality firm is stuck with its own data, causing it to potentially have lower revenue that it would under a more equitable scheme. To address this, we next propose a defection-free scheme in which \emph{both} firms participate in collaboration without losing revenue. 
\begin{figure}[t!]
    \centering
    \begin{subfigure}{0.45\textwidth}
        \centering
        \includegraphics[height=1.5in]{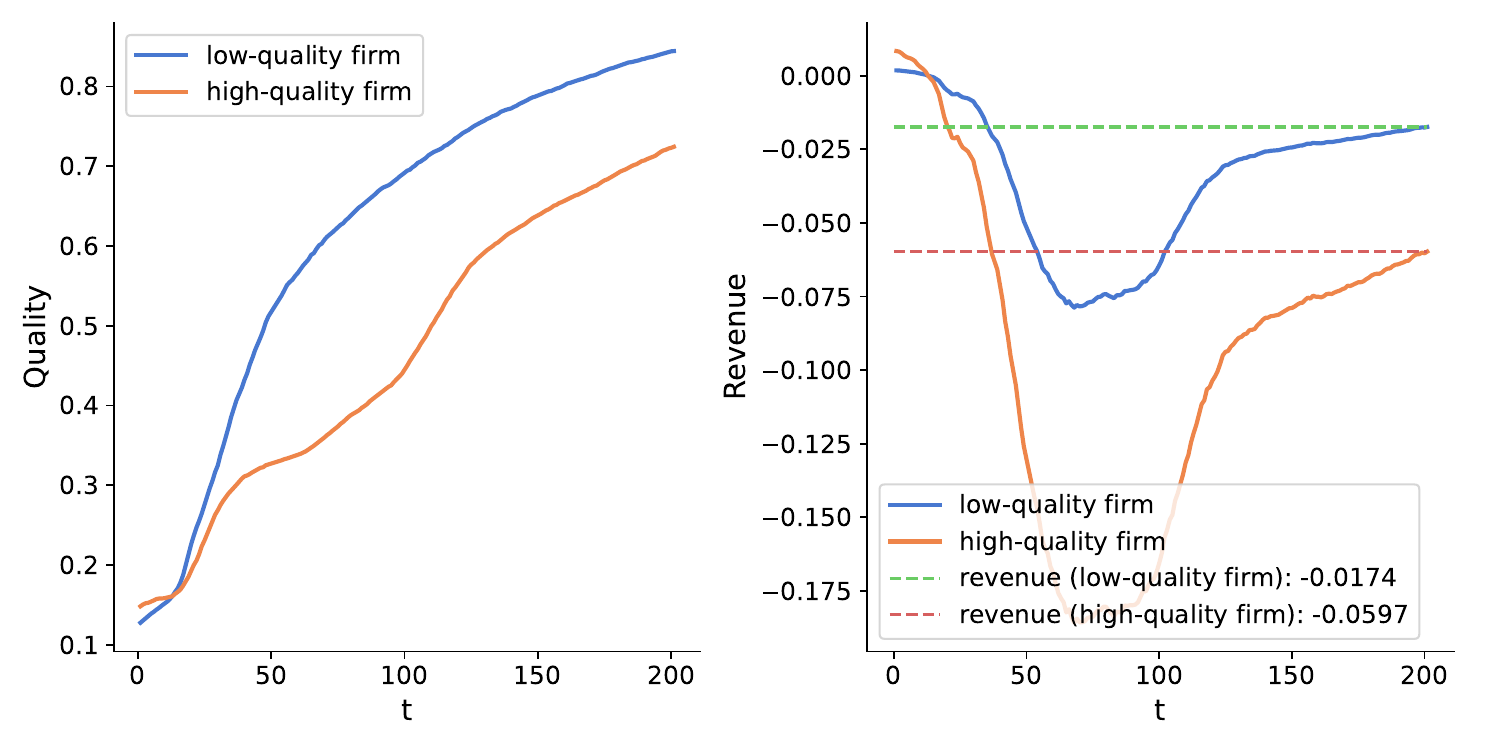}\caption{Only high-quality firm shares.}
        \label{subfig:mnist one-sided sharing high shares}
    \end{subfigure}
    ~
    \begin{subfigure}{0.45\textwidth}
        \centering
        \includegraphics[height=1.5in]{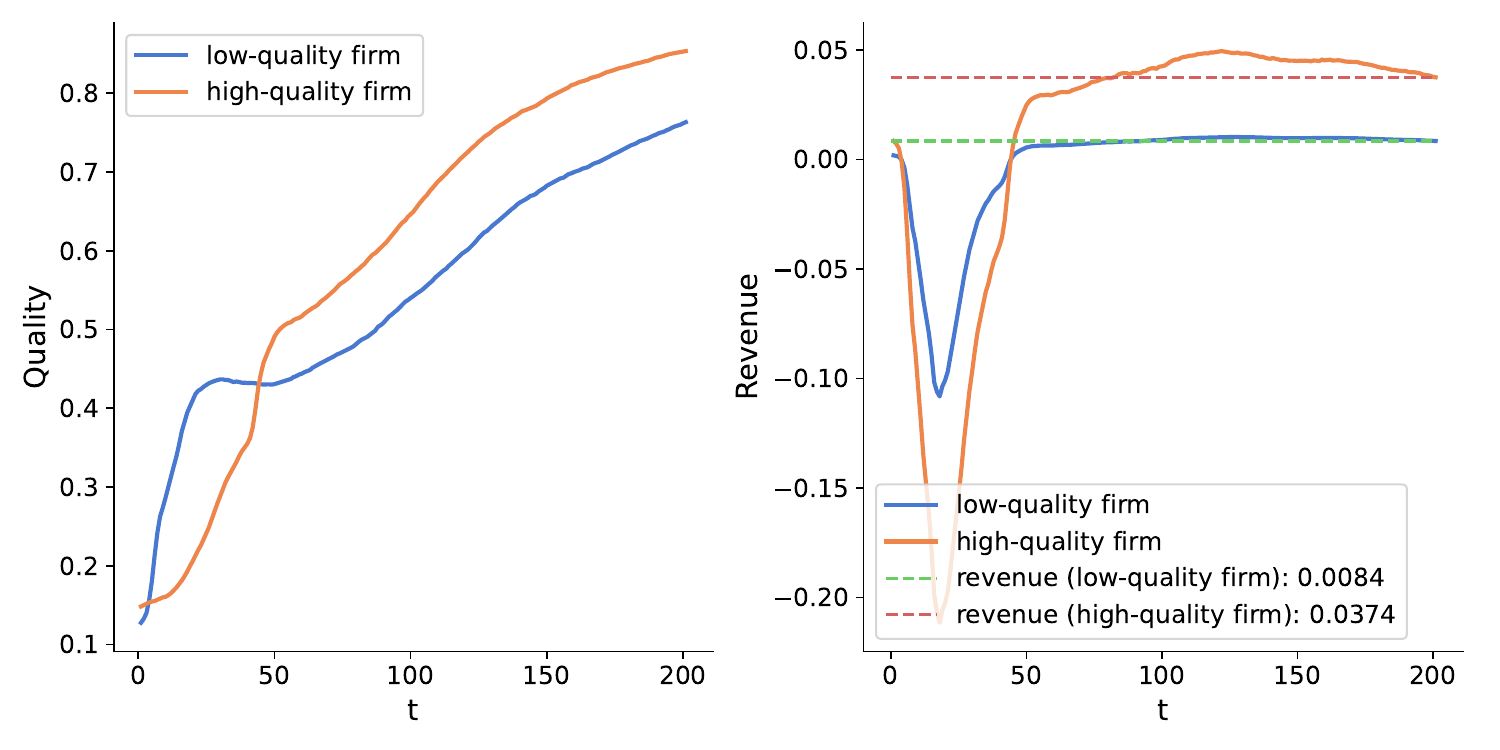}\caption{Only low-quality firm shares.}
        \label{subfig:mnist one-sided sharing low shares}
    \end{subfigure}
    \caption{Performance of One-sided Sharing schemes on MNIST. When only the high-quality firm shares, the high-quality firm's revenue becomes negative. When only the low-quality firm shares, both firms have positive, but less, revenue than with our collaboration scheme (Figure \ref{fig:nash sharing utilities}).}\label{fig:one-sided sharing utilities}
\end{figure}

\section{Defection-Free Collaborative Learning}
In this section, we introduce our method, Defection-Free Collaborative Learning (Defection-Free CL). Our objectives in designing this algorithm are that
\begin{enumerate}
    \item For all starting values $(q_{l,0}, q_{h,0})$, neither firm's revenue decreases at any round, and
    \item The algorithm converges to the Nash bargaining solution, which we denote $(q_l^*, q_h^*)$. (See Section \ref{sec:nash bargaining solution}.)
\end{enumerate}
The first objective ensures that the algorithm is defection-free. The second seeks a point of convergence that maximizes the joint surplus of the firms. In Section \ref{subsec:theory and analysis}, we show that Algorithm \ref{alg:defection-free cl} achieves 1) entirely and achieves 2) for a large range of starting conditions. Before describing our algorithm, we first motivate the Nash bargaining solution as a suitable convergence goal for our problem setting. 

\subsection{Nash bargaining}\label{sec:nash bargaining solution}
In cooperative bargaining, agents determine how to share a surplus amongst themselves. If negotiations fail, each agent is guaranteed some fixed surplus, known as the \emph{disagreement point}. A typical application of bargaining involves deciding how to split a firm's profits amongst its employees. The bargaining framework is suitable for our purposes because the firms must agree how to share a ``surplus of quality'' (i.e., set model qualities relative to each other) so that neither firm's revenue decreases at any one round. 

An important framework in cooperative bargaining is Nash bargaining \citep{nash1950bargaining}, a two-person bargaining scheme, which solves for 
\begin{align}\label{eq:Nash bargaining problem}
    (q_l^*, q_h^*) 
    = 
    \argmax_{(q_l, q_h)} \quad & N(q_l, q_h, q_{l,0}, q_{h,0})\\
    \textrm{s.t.} \quad & U_l(q_l, q_h) \geq U_l(q_{l,0}, q_{h,0})\\
    \quad & U_h(q_l, q_h) \geq U_h(q_{l,0}, q_{h,0}),
\end{align}
where
\begin{align}\label{eq:Nash bargaining objective}
    N(q_l, q_h, q_{l,0}, q_{h,0}) 
    \overset{\text{def}}{=} 
    (U_l(q_l, q_h) - U_l(q_{l,0}, q_{h,0}))(U_h(q_l, q_h) - U_h(q_{l,0}, q_{h,0})),
\end{align}
and $(q_{l,0}, q_{h,0})$ are the initial model qualities of the firms. The \emph{Nash bargaining solution}, $(q_l^*, q_h^*)$, maximizes the product of the \emph{improvement} in the firms' utilities. Therefore, unlike one-sided collaboration, the Nash objective rewards improvement in the low-quality firm's utility as well as the high-quality firm's utility. In Nash bargaining, the \emph{disagreement point} $(q_{l,0},q_{h,0})$ determines the surplus for the parties if negotiations fall apart. In our setting, if either firm defects from collaboration, both firms retain their current model qualities. Going forward, we use $N(q_l,q_h)$ as shorthand for $N(q_l,q_h,q_{l,0},q_{h,0})$. The Nash bargaining solution $(q_l^*, q_h^*)$ has four important properties: 1) it is invariant to affine transformation of the utility functions, 2) it is pareto efficient, 3) it is symmetric, and 4) it is independent of irrelevant alternatives. In fact, the point $(q_l, q_h)$ with these four properties is uniquely the Nash bargaining solution.  

The next proposition shows that $q_h^*$ is equivalent to the high-quality firm's maximal quality.
\begin{prop}[Equivalence between maximal quality and the Nash bargaining solution]
\begin{align}
    q_h^* = \max_{x \in \mathcal{X}} q(x).
\end{align}
\end{prop}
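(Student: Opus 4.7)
The plan is to use the monotonicity of both utilities in $q_h$ (Proposition \ref{prop:relationship between U and q}, parts (1) and (3)) to argue that the Nash product \eqref{eq:Nash bargaining objective} is strictly increased by pushing $q_h$ to its upper bound $\bar q \overset{\text{def}}{=} \max_{x \in \mathcal{X}} q(x)$. Since $q_h$ is bounded above by $\bar q$ by definition, this forces $q_h^* = \bar q$ at the optimum, and no manipulation of the closed forms \eqref{eq:price-optimal utilities} is required.

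The core step is a contradiction argument. Suppose $q_h^* < \bar q$, and compare the Nash objective at $(q_l^*, q_h^*)$ with its value at the alternative $(q_l^*, \bar q)$ obtained by raising $q_h$ only. By parts (1) and (3) of Proposition \ref{prop:relationship between U and q}, both $U_l(q_l^*, \cdot)$ and $U_h(q_l^*, \cdot)$ strictly increase. Hence the alternative remains feasible (the bargaining constraints are preserved, since each $U_{\nicefrac{l}{h}}$ only grows in $q_h$), and both factors of $N$ strictly increase. Provided both factors at the optimum are already strictly positive, this contradicts the optimality of $(q_l^*, q_h^*)$ and yields $q_h^* = \bar q$.

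The one subtlety, and the step I expect to demand a little care, is ensuring that the Nash value at the optimum is strictly positive so that the strict increase of the factors actually yields a strict increase in $N$. This reduces to exhibiting one feasible point with $N > 0$: in the nontrivial case $q_{h,0} < \bar q$, the candidate $(q_{l,0}, \bar q)$ does the job, since the same monotonicity argument applied at the disagreement point gives $U_{\nicefrac{l}{h}}(q_{l,0}, \bar q) > U_{\nicefrac{l}{h}}(q_{l,0}, q_{h,0})$, so both Nash factors there are strictly positive. The trivial case $q_{h,0} = \bar q$ is immediate from the no-degradation assumption, which forces $q_h^* \geq q_{h,0} = \bar q$. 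Overall the proof rests entirely on the sign structure established in Proposition \ref{prop:relationship between U and q}.
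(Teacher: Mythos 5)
Your proposal is correct and follows essentially the same route as the paper: both rest entirely on parts (1) and (3) of Proposition \ref{prop:relationship between U and q}, the paper via the sign of $\partial N/\partial q_h$ and you via an equivalent comparison-and-contradiction argument. Your extra care about feasibility preservation and strictness of the increase is a welcome tightening of the paper's one-line derivative argument, not a different approach.
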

\begin{proof}
From Proposition \ref{prop:relationship between U and q}, $\frac{\partial U_h}{\partial q_h}$ and $\frac{\partial U_l}{\partial q_h}$ are both non-negative for all $q_l \leq q_h$, and consequently $\frac{\partial N(q_l,q_h)}{\partial q_h} \geq 0$ for all $q_l \leq q_h$. This means that for any $q_l$, the $N(q_l, q_h)$ can always be improved by increasing $q_h$. Therefore, $q_h^*$ is necessarily $\max_{x \in \mathcal{X}} q(x)$. 
\end{proof}

Section \ref{subsec:one-sided collaboration} shows there is a defection-free scheme in which the low-quality firm shares but the high-quality firm doesn't. In Algorithm \ref{alg:defection-free cl}, we give a way for both firms to contribute to collaboration with neither firm losing revenue at any step. Due to the more equitable design of this collaboration scheme, its dynamics mirror those of Nash bargaining which maximizes the joint surplus of the participants. 

The difficulty of designing Algorithm \ref{alg:defection-free cl} is that, in order to reach $(q_l^*, q_h^*)$ without decreasing revenues at any step, neither firm can improve its quality too much in a given step. Given an increase in the high-quality firm's quality $q_{h,t-1} \rightarrow q_{h,t}$, the low-quality firm can only improve by some limited amount without decreasing the high-firm's revenue (since $U_h$ is decreasing in $q_l$ by Prop. \ref{prop:relationship between U and q}). Because of this capped permissible improvement for the low-quality firm, if the high-quality firm converges to $q_h^*$ too quickly, the low-quality firm will never reach $q_l^*$. 

We describe the key steps of Algorithm \ref{alg:defection-free cl}. We also assume that, prior to the algorithm, both firms have saturated training on their own datasets and will only update their models collaboratively going forward. Since $U_l$ and $U_h$ both increase in $q_h$, the low-quality firm should always share with the high-quality firm. Step 4 ensures this, where the high-quality firm has access to the low-quality firm's loss on its model $x_{h,t-1}$ when updating. As we show in Section \ref{subsec:theory and analysis}, in order to converge to the Nash bargaining solution, the low-quality firm should not update if $q_{l,t} \geq q_l^*$ or $\rho_{t-1} > \rho^*$. Step 7 ensures this. Since $U_h$ decreases in $q_l$, the low-quality firm cannot improve its model beyond a certain threshold before the high-quality firm loses revenue. This threshold $\hat{q}_{l,t}$ is computed in Step 8, and in Steps 9-11, the high-quality firm will only collaborate if the collaborative updates to the low-quality firm's model do not improve its quality beyond $\hat{q}_{l,t}$. 

\begin{algorithm}[!t]
\caption{Defection-Free Collaborative Learning}\label{alg:defection-free cl}
\textbf{Input:} Low-quality model: $x_{l,0}$. High-quality model: $x_{h,0}$.
\newline
\textbf{Note:} We assume both firms are trusted parties and will honestly exchange information. For example, to perform the necessary computations, the high-quality firm requires $x_l$ and $\nabla f(x_h;l)$ from the low-quality firm, and the low-quality firm requires $x_h$, $\nabla f(x_l;h)$, $f(x_h;h)$, and $f(x_l;h)$ from the high-quality firm. 
\begin{algorithmic}[1]
\For {$t \in [T]$}
\State \underline{\textbf{High-quality Model Update}}
\State Set $\alpha_{h,t} \leq \frac{1}{L}$.
\State Update: $x_{h,t} = x_{h,t-1} - \alpha_{h,t}\nabla_{x_{h,t-1}} f_{h,t-1}$.
\State \underline{\textbf{Low-quality Model Update}}
\State $x_{l,t} = x_{l,t-1}$.
\If {$q_{l,t} < q_l^*$ and $\frac{q_{l,t}}{q_{h,t}}\leq \rho^*=\frac{q_l^*}{q_h^*}$}
\State Compute: $\hat{q}_{l,t} = B\bigg(\rho_{t-1},\frac{q_{h,t}}{q_{h,t-1}}\bigg)q_{h,t}$,
where
\begin{align}
    B(a,b)
    &\overset{\text{def}}{=}
    4-\frac{(4-a)^2}{2(1-a)}\bigg(b-\sqrt{b^2 - \frac{12(1-a)}{(4-a)^2}b}\bigg).
\end{align}
\While {$q_{l,t} \leq \hat{q}_{l,t}$}
\State Set: $\alpha_{l,t}$.
\State Update: $x_{l,t} \leftarrow x_{l,t} - \alpha_{l,t}\nabla_{x_{l,t}} f_{l,t}$
\EndWhile
\EndIf
\EndFor
\State{\bfseries Output: $x_{l,T}, x_{h,T}$}
\end{algorithmic}
\end{algorithm}


In the next section we prove the two key properties of Defection-Free Collaborative Learning: 1) it guarantees the firms non-decreasing revenue at every step, and 2) it converges to the Nash bargaining solution for all but trivial starting conditions.

\subsection{Theory and analysis}\label{subsec:theory and analysis}
  
The following proposition shows that Algorithm \ref{alg:defection-free cl} is defection-free.
\begin{prop}[Non-decreasing revenues]\label{prop:non-decreasing utilities}
There exist learning rate schedules $\{\alpha_{l,t}\}_t$ and $\{\alpha_{h,t}\}_t$ such that at no step of Algorithm \ref{alg:defection-free cl} does either firm's revenue decrease.
\end{prop}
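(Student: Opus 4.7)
The plan is to analyze the two sub-updates of each round of Algorithm~\ref{alg:defection-free cl} separately---the gradient step on $x_h$ in Steps 3--4, and the gradient loop on $x_l$ in Steps 7--13---use Proposition~\ref{prop:relationship between U and q} to convert each quality change into a revenue change, and pick the learning rate schedules so every change has the right sign. Because only one firm's quality moves within a sub-step, we only need to track the sign and magnitude of that single change.

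For the high-quality sub-update, picking $\alpha_{h,t}\in(0,1/L]$ invokes the convex $L$-smooth descent lemma to give $q_{h,t}\geq q_{h,t-1}$; parts~1 and~3 of Proposition~\ref{prop:relationship between U and q} then make $U_h$ and $U_l$ non-decreasing at the fixed $q_l=q_{l,t-1}$, so this sub-step cannot lower either revenue. For the low-quality sub-update, the key identity is that $\hat q_{l,t}$ is exactly the value of $q_l$ at which $U_h(\hat q_{l,t}, q_{h,t}) = U_h(q_{l,t-1}, q_{h,t-1})$: substituting $\rho=\rho_{t-1}$, $b=q_{h,t}/q_{h,t-1}$, $y=\hat q_{l,t}/q_{h,t}$, and the closed form of $U_h$ from Lemma~\ref{lemma:firm's price-optimal utility}, this equation reduces to the quadratic $c y^2 + (b-8c)y + (16c - b)=0$ with $c=(1-\rho)/(4-\rho)^2$, whose unique positive root is $B(\rho,b)$. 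Since $U_h$ is strictly decreasing in $q_l$ (Prop.~\ref{prop:relationship between U and q}, part~2), any $q_{l,t}\in[q_{l,t-1},\hat q_{l,t}]$ keeps $U_h$ at round $t$ weakly above its pre-round value. For $U_l$, part~4 demands $q_l\leq(4/7)q_h$; the guard $\rho_{t-1}\leq\rho^*$ puts us in this regime at the start, and the interior Nash first-order condition at $(q_l^*,q_h^*)$ combined with $\partial U_h/\partial q_l<0$, $U_l-U_l^0\geq 0$, and $U_h-U_h^0>0$ in the non-trivial case forces $\partial U_l/\partial q_l\geq 0$, giving $\rho^*\leq 4/7$.

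The main obstacle is that nothing in the algorithm intrinsically enforces $q_{l,t}\leq\hat q_{l,t}$ or $q_{l,t}\leq(4/7)q_{h,t}$ throughout the while-loop---the final gradient step may overshoot. I would resolve this by taking $\alpha_{h,t}$ small enough that $b-1=O(\alpha_{h,t})$, so continuity of $B(\rho,\cdot)$ with $B(\rho,1)=\rho\leq\rho^*\leq 4/7$ yields $\hat q_{l,t}\leq(4/7)q_{h,t}$, and by choosing $\alpha_{l,t}$ via backtracking against the $L$-smoothness of $f(\cdot;l)$ and $f(\cdot;h)$ so the loop exits with $q_{l,t}\in[q_{l,t-1},\hat q_{l,t}]$. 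Existence of such schedules uses only continuity and $L$-smoothness of the losses together with $\hat q_{l,t}\geq q_{l,t-1}$ (which follows from $B(\rho,1)=\rho$ and the monotonicity of $B(\rho,\cdot)$ in $b\geq 1$). Combined with the two sub-step analyses, this produces schedules under which neither revenue ever decreases.
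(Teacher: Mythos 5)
Your proof is correct in its overall architecture and matches the paper's on the main points: you split each round into the high-quality and low-quality sub-updates, observe that the high-quality step can only raise both revenues (Prop.~\ref{prop:relationship between U and q}, parts 1 and 3), identify $\hat{q}_{l,t}$ as exactly the level set $U_h(\hat{q}_{l,t},q_{h,t})=U_h(q_{l,t-1},q_{h,t-1})$ via the same quadratic the paper solves (your $c\,y^2+(b-8c)y+(16c-b)=0$ in $y=q_l/q_{h,t}$ is the paper's quadratic in $q_{l,t}$ after rescaling), and cap $\alpha_{l,t}$ so the while-loop cannot overshoot $\hat{q}_{l,t}$ --- the paper does this last step with a convexity lower bound on $f_{l,t}$ and the explicit choice $\alpha_{l,t}=\min\{(\hat{q}_{l,t}-q_{l,t-1})/\|\nabla f_{l,t-1}\|^2,1\}$, essentially your backtracking. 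Where you genuinely diverge is the low-quality firm's own no-loss constraint. The paper treats it symmetrically: it writes $U_l(q_{l,t},q_{h,t})\geq U_l(q_{l,t-1},q_{h,t-1})$ as a second quadratic with right-most root $q^l_{l,t}$ and then verifies \emph{with graphing software} that $q^h_{l,t}\leq q^l_{l,t}$, so the $U_h$ constraint is always the binding one and no extra condition is needed. You instead stay in the regime $q_l\leq\frac{4}{7}q_h$ where $U_l$ is increasing in $q_l$, proving $\rho^*\leq\frac{4}{7}$ from the interior Nash first-order condition $\frac{\partial U_l}{\partial q_l}(U_h-U_{h,0})=-(U_l-U_{l,0})\frac{\partial U_h}{\partial q_l}\geq 0$, and forcing $\hat{q}_{l,t}\leq\frac{4}{7}q_{h,t}$ by taking $\alpha_{h,t}$ small and using $B(\rho,1)=\rho$. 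This buys you a fully analytic argument (your FOC derivation is in fact a cleaner route to the paper's graphically-verified Lemma~\ref{lemma:upper bound on rho^*}, which gives $\rho^*\leq 0.43<\frac{4}{7}$), at the cost of restricting the high-quality firm's step size, whereas the paper's root comparison leaves $\{\alpha_{h,t}\}$ unconstrained. Two small points to tighten: your continuity argument needs a strict margin $\rho^*<\frac{4}{7}$ to give a uniform bound on $b-1$ (the strict version does follow from your FOC when $U_l-U_{l,0}>0$, and is consistent with the paper's $0.43$ bound), and since the proposition only asserts \emph{existence} of learning-rate schedules, your additional smallness requirement on $\alpha_{h,t}$ is admissible even though the algorithm as stated only imposes $\alpha_{h,t}\leq\frac{1}{L}$.
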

\begin{proof}
    See Appendix \ref{proof:non-decreasing utilities prop}.
\end{proof}

We next examine starting conditions for which Algorithm \ref{alg:defection-free cl} converges to the Nash bargaining solution. Proposition \ref{prop:impossibility of convergence} gives a trivial starting condition for which it does not converge. 

\begin{prop}[Impossibility of convergence to the Nash bargaining solution]\label{prop:impossibility of convergence}
If $q_{l,0} > q_l^*$, then Algorithm \ref{alg:defection-free cl} cannot converge to $(q_l^*,q_h^*)$.
\end{prop}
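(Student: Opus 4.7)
The plan is to exploit the monotonicity assumption on model qualities together with the guard condition in Step 7 of Algorithm \ref{alg:defection-free cl}. Recall from the sharing-protocol assumptions in Section \ref{sec:collaboration schemes} that model qualities can only improve or stay the same, not degrade; in particular, $q_{l,t} \geq q_{l,0}$ for every $t$. So if the hypothesis $q_{l,0} > q_l^*$ holds, then $q_{l,t} > q_l^*$ for every $t$, and in particular the sequence $\{q_{l,t}\}_{t\geq 0}$ can never reach $q_l^*$.

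First I would state this monotonicity explicitly and cite the relevant assumption. Then I would observe that Step 7 of Algorithm \ref{alg:defection-free cl} only permits the low-quality firm to update when $q_{l,t} < q_l^*$; under the hypothesis $q_{l,0} > q_l^*$, this condition fails at $t=0$ and, by the monotonicity just noted, continues to fail at every subsequent round. Consequently, Steps 8--12 are never executed and the low-quality firm's model parameters remain pinned at $x_{l,0}$, so $q_{l,t} = q_{l,0}$ for all $t$. Combining this with $q_{l,0} > q_l^*$ gives $q_{l,t} \neq q_l^*$ for every $t$, precluding convergence of the iterates to the Nash bargaining solution $(q_l^*, q_h^*)$.

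There is essentially no technical obstacle here: the statement is a direct structural consequence of (i) the no-degradation assumption on qualities and (ii) the explicit gating condition in Algorithm \ref{alg:defection-free cl}. The only subtlety worth flagging in the write-up is that, even without invoking the gating condition, the monotonicity assumption alone already forces $q_{l,t} \geq q_{l,0} > q_l^*$, so the result is robust to minor algorithmic modifications that relax the guard in Step 7. I would close the proof by remarking that this captures the natural interpretation of the disagreement point in Nash bargaining: if the low-quality firm is already above its Nash-optimal quality at initialization, any further shared-learning step would have to \emph{decrease} $q_l$ in order to realize $q_l^*$, which is incompatible with the collaborative learning setup.
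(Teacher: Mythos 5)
Your argument is correct and is essentially identical to the paper's own (one-line) proof: the no-degradation assumption forces $q_{l,t} \geq q_{l,0} > q_l^*$ for all $t$, so $q_{l,t}$ can never reach $q_l^*$. The additional observations about the Step 7 guard are consistent with the algorithm but not needed, as you yourself note.
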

\begin{proof}
Since firms do not degrade their model quality, the low-quality firm cannot converge to $q_l^*$. 





\end{proof}

In the next proposition, we show that for all other starting conditions, Algorithm \ref{alg:defection-free cl} converges to $(q_l^*,q_h^*)$. Our key insight in the proof of this proposition is that if the high-quality firm converges too quickly to $q_h^*$, the low-quality firm will not be able to make sufficient progress towards $q_l^*$ without violating the no-revenue-loss condition. Therefore, we must design a learning rate schedule for the high-quality firm $\{\alpha_{h,t}\}_t$ such that convergence to $q_h^*$ is properly paced.


 
\begin{prop}[Convergence to the Nash bargaining solution]\label{prop:sufficient conditions for convergence}
If $q_{l,0} \leq q_l^*$, then there exist learning rate schedules $\{\alpha_{l,t}\}_{t=1}^T$ and $\{\alpha_{h,t}\}_{t=1}^T$ such that after $T$ rounds Algorithm \ref{alg:defection-free cl} converges to $(q_l^*,q_h^*)$.
\end{prop}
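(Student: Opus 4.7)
The plan is to construct learning rate schedules $\{\alpha_{h,t}\}$ and $\{\alpha_{l,t}\}$ that pace the two firms so that $q_{h,t} \to q_h^*$ and $q_{l,t} \to q_l^*$ simultaneously. For the high-quality side, any schedule $\alpha_{h,t} \in (0, 1/L]$ with $\sum_t \alpha_{h,t} = \infty$ produces $q_{h,t} \to q_h^*$ by standard gradient descent on convex $L$-smooth objectives, and the freedom to shrink $\alpha_{h,t}$ without losing this limit is the main lever in the construction. For the low-quality side, I would pick $\alpha_{l,t}$ and the number of inner iterations of the while loop in Algorithm \ref{alg:defection-free cl} so as to saturate the threshold, giving $q_{l,t} = \min\{\hat q_{l,t}, q_l^*, \rho^* q_{h,t}\}$ at each round, which is feasible by convexity and $L$-smoothness of $f(\cdot;l)$.

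The first technical step is a room-for-progress lemma: whenever $q_{l,t-1} < q_l^*$, $\rho_{t-1} \leq \rho^*$, and $q_{h,t} > q_{h,t-1}$, the threshold satisfies $\hat q_{l,t} > q_{l,t-1}$. This follows from the construction of $B$ as the largest $q_l$ preserving $U_h(q_l, q_{h,t}) \geq U_h(q_{l,t-1}, q_{h,t-1})$: by Proposition \ref{prop:relationship between U and q}, $U_h$ is strictly increasing in $q_h$, so the positive slack created by $q_{h,t} > q_{h,t-1}$ opens room on the $q_l$ side. A first-order expansion of $B(\rho_{t-1}, q_{h,t}/q_{h,t-1})$ around $q_{h,t}/q_{h,t-1} = 1$ gives a quantitative version, bounding $\hat q_{l,t} - q_{l,t-1}$ below by a positive multiple of $q_{h,t} - q_{h,t-1}$ with a constant depending on $\rho_{t-1}$. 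The edge case $\rho_0 > \rho^*$ is handled separately: Step 7 of the algorithm pauses the low-quality firm, but $q_{h,t}$ still grows, so $q_{l,0}/q_{h,t}$ drops to $\rho^*$ after finitely many rounds, at which point the main regime resumes.

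With these ingredients in place, both $\{q_{l,t}\}$ and $\{q_{h,t}\}$ are monotone non-decreasing and bounded above by $q_l^*$ and $q_h^*$, so limits $\bar q_l \leq q_l^*$ and $\bar q_h = q_h^*$ exist. The main obstacle is ruling out $\bar q_l < q_l^*$: if $\alpha_{h,t}$ is fixed and too aggressive, the capped low-quality increments may sum to less than $q_l^* - q_{l,0}$ while $q_h$ already converges, freezing the low-quality firm short of the target. To avoid this I would choose $\alpha_{h,t}$ adaptively, shrinking it at each round so that the allowed increment $\hat q_{l,t} - q_{l,t-1}$ closes at least a prescribed fraction of the residual gap $q_l^* - q_{l,t-1}$. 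A telescoping argument then drives $q_{l,t} \to q_l^*$, while letting $\alpha_{h,t}$ decay only polynomially preserves $\sum_t \alpha_{h,t} = \infty$ and hence $q_{h,t} \to q_h^*$. The delicate part is checking that the constants from the $B$-expansion do not degenerate as $\rho_{t-1}$ approaches $\rho^*$, so that the adaptive pacing remains self-consistent along the trajectory.
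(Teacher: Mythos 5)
Your skeleton matches the paper's: pause the low-quality firm while $\rho_t>\rho^*$, let it otherwise saturate the cap $\hat q_{l,t}$, and pace the high-quality firm's learning rate so that $q_h$ does not race to $q_h^*$ before $q_l$ catches up. Your ``room-for-progress'' observation is the paper's Lemma \ref{lemma:increasing ratios} ($B(a,b)\ge a$ for $b\ge 1$), and your first-order expansion of $B$ around $b=1$ plays the role of the paper's lower bound $B(\rho,b)-\rho\ge(4-5\rho)\log_{10}b$ for $b\le\tilde b$. However, there is a genuine gap in your Case-2 argument: you never verify that the \emph{total} permissible improvement of the low-quality firm suffices to reach $q_l^*$. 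Writing $b_t=q_{h,t}/q_{h,t-1}$, the ratio increment each round is at most $B(\rho_{t-1},b_t)-\rho_{t-1}\approx \partial_bB(\rho_{t-1},1)\,(b_t-1)$, and since $\prod_t b_t=q_{h,T}/q_{h,0}\le q_h^*/q_{h,0}$ is a fixed budget, the cumulative ratio gain is capped by roughly $\bigl(\max_\rho \partial_b B(\rho,1)\bigr)\log(q_h^*/q_{h,0})$ \emph{no matter how you pace} $\{\alpha_{h,t}\}$. Whether this cap exceeds $\rho^*-\rho_0$ is exactly the nontrivial global inequality the paper must check, namely $(4-5\rho^*)\log_{10}(q_h^*/q_{h,0})\ge\rho^*-\rho_0$, which it establishes using the location of the Nash bargaining solution (e.g., Lemma \ref{lemma:upper bound on rho^*}, $\rho^*\le 0.43$) and graphical verification. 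Your monotone-convergence argument only yields some limit $\bar q_l\le q_l^*$; without the budget check, $\bar q_l<q_l^*$ is not ruled out.

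Relatedly, the adaptive mechanism you propose to close this gap is self-defeating as stated: shrinking $\alpha_{h,t}$ shrinks $q_{h,t}-q_{h,t-1}$ and hence shrinks the allowed increment $\hat q_{l,t}-q_{l,t-1}$, so you cannot force the low-quality firm to close ``a prescribed fraction of the residual gap $q_l^*-q_{l,t-1}$'' by slowing the high-quality firm. What pacing actually buys (and what the paper uses it for) is second-order: $B(\rho,\cdot)$ loses efficiency when $b$ is far from $1$, so keeping $b_t\le\tilde b\approx 1.03$ ensures you extract the full first-order ratio gain per unit of $\log q_h$ spent; it does not enlarge the budget. To repair your proof you would need to (i) state and verify the budget-sufficiency inequality for all admissible disagreement points $(q_{l,0},q_{h,0})$ with $\rho_0\le\rho^*$, and (ii) replace the gap-fraction rule with an explicit cap of the form $\alpha_{h,t}=\min\{(\tilde b-1)q_{h,t-1}/\|\nabla f_{h,t-1}\|^2,\,1/L\}$ together with a finite-time bound showing $q_{h,T}\to q_h^*$, which is how the paper closes the argument.
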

\begin{proof}
    See Appendix \ref{proof:sufficient conditions for convergence prop}.
\end{proof}

Proposition \ref{prop:sufficient conditions for convergence} shows that even when both firms myopically attend to improving their own revenues, Algorithm \ref{alg:defection-free cl} converges to the Nash bargaining solution which maximizes joint surplus. The following theorem gives the rate of convergence to the Nash bargaining solution for convex and $L$-smooth losses. 
\begin{theorem}[Convergence Rate of Defection-Free Collaborative Learning]\label{thm:convergence rate}
Suppose $q_{l,0} \leq q_l^*$. Then running Algorithm \ref{alg:defection-free cl} for $T$ rounds ensures
\begin{align}\label{eq:nash convergence}
     N(q_l^*, q_h^*) - N(q_{l,T}, q_{h,T})
     &\lesssim
     \frac{\|x_{h,0}-x_h^*\|^2}{\sum_{t=1}^T \alpha_{h,t}} + |\rho^* - \rho_T|.
\end{align}
\end{theorem}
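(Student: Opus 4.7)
The plan is to bound the Nash-objective gap by two pieces---a ``$q_h$ piece'' governed by gradient descent on the shared objective and a ``$\rho$ piece'' left in its stated form, since how fast $\rho_T \to \rho^*$ is controlled by the low-firm's update schedule treated in Proposition \ref{prop:sufficient conditions for convergence}.

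First I would reparameterize $N$ in the variables $(q_h,\rho)$ with $\rho := q_l/q_h$. From Lemma \ref{lemma:firm's price-optimal utility} the price-optimal utilities factor cleanly as $U_l(q_l,q_h) = q_h\, g_l(\rho)$ and $U_h(q_l,q_h) = q_h\, g_h(\rho)$, where $g_l(\rho) = \rho(1-\rho)/(4-\rho)^2$ and $g_h(\rho) = 4(1-\rho)/(4-\rho)^2$, so $N(q_l,q_h) = (q_h g_l(\rho) - U_{l,0})(q_h g_h(\rho) - U_{h,0}) =: \tilde N(q_h,\rho)$. I would then split the gap as
\begin{align}
N(q_l^*,q_h^*) - N(q_{l,T},q_{h,T}) = \left[\tilde N(q_h^*,\rho^*) - \tilde N(q_{h,T},\rho^*)\right] + \left[\tilde N(q_{h,T},\rho^*) - \tilde N(q_{h,T},\rho_T)\right],
\end{align}
and control each bracket by a Lipschitz bound in one variable: since $q_h,\rho \in [0,1]$ and $4-\rho \geq 3$, the functions $g_l, g_h$ and their derivatives are uniformly bounded, so $\partial_{q_h}\tilde N$ and $\partial_\rho \tilde N$ are bounded by absolute constants $C_1,C_2$, giving
\begin{align}
N(q_l^*,q_h^*) - N(q_{l,T},q_{h,T}) \;\leq\; C_1\,(q_h^* - q_{h,T}) + C_2\,|\rho^* - \rho_T|.
\end{align}

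Next I would invoke the standard convex-smooth gradient-descent rate for the $q_h$ term. Steps 3--4 of Algorithm \ref{alg:defection-free cl} perform plain gradient descent on the convex, $L$-smooth objective \eqref{eq:shared objective} with stepsizes $\alpha_{h,t}\leq 1/L$, hence $f(x_{h,T}) - f(x_h^*) \lesssim \|x_{h,0}-x_h^*\|^2 / \sum_{t=1}^T \alpha_{h,t}$. Because $q_h = 1 - f_h$ and $q_h^* = 1 - \min_x f(x)$, this is exactly a bound on $q_h^* - q_{h,T}$, and combining with the Lipschitz step yields \eqref{eq:nash convergence}.

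The main obstacle is the Lipschitz step---specifically, verifying that the constants $C_1,C_2$ do not depend on the iteration or on where $\rho_T$ sits in $[0,\rho^*]$. This reduces to checking that no denominator in the rational expressions for $U_l, U_h$ vanishes, which holds since $\rho \leq 1 < 4$. The remainder is a routine combination of the reparameterization with the textbook gradient-descent analysis.
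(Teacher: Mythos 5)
Your proof is correct and reaches the stated bound, but by a genuinely different decomposition than the paper's. The paper expands $N$ to second order around $(q_{l,T},q_{h,T})$ via Taylor's theorem, bounds the first and second partial derivatives of $N$ by constants (appealing to graphing software for verification), and then converts the $q_l$-increment into $q_h$- and $\rho$-increments through the inequality $q_l^*-q_{l,T}=\rho^*q_h^*-\rho_Tq_{h,T}\leq \rho^*(q_h^*-q_{h,T})+q_{h,T}|\rho^*-\rho_T|$. You instead exploit the exact factorizations $U_l=q_h\,g_l(\rho)$ and $U_h=q_h\,g_h(\rho)$ (which are correct: substituting $q_l=\rho q_h$ into \eqref{eq:price-optimal utilities} cancels the $q_h$ powers as you claim), telescope the Nash gap along the two coordinates $(q_h,\rho)$, and bound each increment with a first-order Lipschitz estimate. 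Both arguments do the same essential work---isolating a $q_h^*-q_{h,T}$ term to be controlled by the gradient-descent rate and leaving a residual $|\rho^*-\rho_T|$ term---but your reparameterization makes the boundedness of the relevant derivatives transparent (the denominators $(4-\rho)^2\geq 9$ never vanish on $\rho\in[0,1]$), which avoids the second-order Taylor terms and replaces the paper's numerical verification with an elementary observation. The final ingredient, the standard $\|x_{h,0}-x_h^*\|^2/\sum_{t=1}^T\alpha_{h,t}$ rate for gradient descent on the convex, $L$-smooth averaged objective $f_h$, together with the identity $q_h^*-q_{h,T}=f_{h,T}-f_h^*$, is identical in both proofs.
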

\begin{proof}
    See Appendix \ref{proof:convergence rate theorem}.
\end{proof}
The first term in the bound \eqref{eq:nash convergence} shows that the convergence rate to the Nash bargaining solution is determined by the rate at which $q_h$ converges to $q_h^*$.

The following corollary shows the rate at which the $|\rho^* - \rho_T|$ term in Theorem \ref{thm:convergence rate} vanishes with $T$.
\begin{cor}\label{cor:simplified convergence bound}
Suppose $q_{l,0} \leq q_l^*$. Running Algorithm \ref{alg:defection-free cl} for $T \gtrsim \frac{L\|x_{h,0}-x_h^*\|^2}{\epsilon}$ rounds ensures that
\begin{align}\label{eq:simplified convergence bound}
     N(q_l^*, q_h^*) - N(q_{l,T}, q_{h,T})
     &\lesssim
     \frac{\|x_{h,0}-x_h^*\|^2}{\sum_{t=1}^T \alpha_{h,t}} + (4-5\rho^*)\log\bigg(\frac{q_h^*}{q_h^*-\epsilon}\bigg).
\end{align}
\end{cor}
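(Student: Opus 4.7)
My plan is to derive the corollary from Theorem~\ref{thm:convergence rate} by separately handling the two terms in \eqref{eq:nash convergence}: the first term is inherited unchanged, so the real work is converting the second term $|\rho^* - \rho_T|$ into the claimed $\log\bigl(q_h^*/(q_h^*-\epsilon)\bigr)$ expression with prefactor $4-5\rho^*$.

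The first step is to translate the round budget $T \gtrsim L\|x_{h,0}-x_h^*\|^2/\epsilon$ into the quality gap $q_h^* - q_{h,T} \leq \epsilon$. Since Step~4 of Algorithm~\ref{alg:defection-free cl} is plain gradient descent on the convex, $L$-smooth average-loss objective $f$ from \eqref{eq:shared objective} with step sizes $\alpha_{h,t} \leq 1/L$, the standard telescoping analysis gives $f(x_{h,T}) - f(x_h^*) \leq \|x_{h,0}-x_h^*\|^2/\bigl(2\sum_{t=1}^T \alpha_{h,t}\bigr)$. Taking $\alpha_{h,t}=1/L$ and the hypothesized $T$ yields $q_h^* - q_{h,T} \leq \epsilon$.

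The second step is to relate $\rho_T$ to $q_{h,T}$ via the invariant implicit in the definition of $B$. By construction (Step~8), the threshold $\hat{q}_{l,t}$ is the largest value of $q_l$ that keeps $U_h(\cdot,q_{h,t})$ equal to $U_h(q_{l,t-1},q_{h,t-1})$; assuming Steps~9--11 saturate this threshold via a small enough $\alpha_{l,t}$, the algorithm preserves $q_{h,t}\, h(\rho_t) = q_{h,0}\, h(\rho_0) = q_h^*\, h(\rho^*)$, where $h(\rho) \overset{\text{def}}{=} (1-\rho)/(4-\rho)^2$ is the quantity for which Lemma~\ref{lemma:firm's price-optimal utility} gives $U_h(\rho q_h, q_h) = 4 q_h h(\rho)$. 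Taking logarithms,
\begin{equation}
    \log h(\rho_T) - \log h(\rho^*) \;=\; \log(q_h^*/q_{h,T}) \;\leq\; \log\bigl(q_h^*/(q_h^*-\epsilon)\bigr),
\end{equation}
and since $h$ is strictly decreasing on $[0,1)$, this also forces $\rho_T \leq \rho^*$.

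The last step is to convert this logarithmic bound on $h(\rho)$ into an additive bound on $|\rho^* - \rho_T|$ via the mean-value theorem. A partial-fraction calculation yields $(\log h)'(\rho) = -(2+\rho)/[(1-\rho)(4-\rho)]$, so for some $\xi \in [\rho_T,\rho^*]$,
\begin{equation}
    |\rho^* - \rho_T| \;\leq\; \frac{(1-\xi)(4-\xi)}{2+\xi}\,\log\bigl(q_h^*/(q_h^*-\epsilon)\bigr).
\end{equation}
To extract the clean prefactor $4-5\rho^*$, I would use the algebraic identity $(1-\rho)(4-\rho) = (4-5\rho)+\rho^2$ together with the fact that at the Nash optimum $U_l$ must be non-decreasing in $q_l$ (else perturbing $q_l$ downward would strictly increase $N$), which by Prop.~\ref{prop:relationship between U and q} forces $\rho^* \leq 4/7$ and keeps $(4-5\rho^*)$ positive and of the same order as $(1-\rho^*)(4-\rho^*)/(2+\rho^*)$. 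Substituting the resulting bound on $|\rho^* - \rho_T|$ back into Theorem~\ref{thm:convergence rate} then yields \eqref{eq:simplified convergence bound}. The main obstacle I anticipate is this final algebraic step: the MVT fixes $\xi$ at an unspecified point in $[\rho_T,\rho^*]$, so one must carefully track the monotonicity of $(1-\xi)(4-\xi)/(2+\xi)$ and absorb the $\rho^{*2}$ correction into the $\lesssim$ constant uniformly over the admissible range of $\rho^*$.
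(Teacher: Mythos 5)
Your first step (translating the round budget into $q_h^*-q_{h,T}\le\epsilon$ via the standard convex, $L$-smooth gradient-descent bound) matches the paper. The gap is in your second step. The invariant you propose, $q_{h,t}\,h(\rho_t)=q_{h,0}\,h(\rho_0)$ with $h(\rho)=(1-\rho)/(4-\rho)^2$, is indeed what full saturation of the threshold $\hat q_{l,t}$ would enforce, since $\hat q_{l,t}$ is exactly the value at which $U_h$ is held constant from round to round. But the further equality $q_{h,0}\,h(\rho_0)=q_h^*\,h(\rho^*)$ is false except in degenerate cases: it is equivalent to $U_h(q_l^*,q_h^*)=U_h(q_{l,0},q_{h,0})$, i.e.\ to the high firm's improvement being zero at the Nash bargaining solution, which would force $N(q_l^*,q_h^*)=0$. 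The Nash solution is instead characterized by the first-order condition on the product (the cubic in Lemma \ref{lemma:upper bound on rho^*}), and generically both factors are strictly positive there. So your identity $\log h(\rho_T)-\log h(\rho^*)=\log(q_h^*/q_{h,T})$ does not hold, and the mean-value-theorem step that follows has nothing to bound. Relatedly, the algorithm does not saturate $\hat q_{l,t}$ throughout: line 7 halts the low firm's updates once $\rho_t>\rho^*$ (and, when $\rho_0\ge\rho^*$, the low firm does not update at all until the ratio falls to $\rho^*$), precisely so that $U_h$ can strictly increase in the later rounds.

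The paper obtains the $(4-5\rho^*)$ prefactor from an entirely different mechanism, internal to the proof of Proposition \ref{prop:sufficient conditions for convergence}: it caps the high firm's per-round quality ratio at $q_{h,t}/q_{h,t-1}\le\tilde b$ via a throttled learning rate, invokes the (graphically verified) pointwise bound $(4-5\rho)\log_{10}b\le B(\rho,b)-\rho$ for $b\le\tilde b$, telescopes $\sum_t(\rho_t-\rho_{t-1})$, and combines this with the empirical inequality $(4-5\rho^*)\log_{10}(q_h^*/q_{h,0})\ge\rho^*-\rho_0$ to conclude $\rho^*-\rho_T\le(4-5\rho^*)\log_{10}\bigl(q_h^*/(q_h^*-\epsilon)\bigr)$; the pacing constraint on $\{\alpha_{h,t}\}$ that your argument omits is essential there. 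One piece of your proposal is worth keeping: your observation that at the optimum $U_l$ must be non-decreasing in $q_l$ (otherwise decreasing $q_l$ raises both factors of $N$), which by Proposition \ref{prop:relationship between U and q} forces $\rho^*\le 4/7$, is a cleaner, non-empirical route to the conclusion of Lemma \ref{lemma:upper bound on rho^*}. But as written the corollary's second term is not established by your argument.
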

\begin{proof}
    See Appendix \ref{proof:simplified convergence bound corollary}.
\end{proof}
\section{Experiments}
All algorithms in our \href{https://github.com/mwerner28/datasharing}{experiments} are implemented with PyTorch \citep{paszke2019pytorch}. Our general experimental setup is the following. We construct three datasets: the low-quality firm's dataset $\lowtrain$, the high-quality firm's dataset $\hightrain$, and a common test set for both firms $\target$. The datasets are constructed such that $\lowtrain \not\sim \target$ and $\hightrain \not\sim \target$, but $\lowtrain \cup \hightrain \sim \target$, i.e., neither firm's individual distribution matches the target distribution, but their combined datasets are distributed as the target distribution, incentivizing them to share. We use cross-entropy loss and PyTorch's built-in SGD optimizer for all experiments.


\paragraph{MNIST} We use a LeNet-5 model \citep{lecun1998learning}, set $|\lowtrain|=|\hightrain|=1000$, and use the MNIST test set as $\target$. We construct $\lowtrain$ so that $\hat{F}(5)=0.8$ and $\hightrain$ so that $\hat{F}(5)=0.2$, where $\hat{F}$ is the empirical CDF over the label space. We train the high-quality firm's model for 10 initial epochs, and for all models and experiments set the learning rate to $0.01$.
\paragraph{Defection-Free Collaborative Learning (Figure \ref{fig:nash sharing utilities}).} Since the low-quality firm shares with the high-quality firm, the high-quality firm improves maximally. The high-quality firm only shares with the low-quality firm to the extent that neither firm's revenue decreases. Under this sharing scheme, we see in the first figure that both firms' qualities increase, and the ratio of their qualities converges to the optimal ratio. The second figure shows that revenues increase (do not decrease), and notably their revenues reach a higher level than under One-sided Collaboration (Section \ref{subsec:one-sided collaboration}). Finally, the last figure shows that the Nash bargaining objective approaches its maximal value, showing convergence to the Nash bargaining solution.
\begin{figure}[t!]
    \centering
    \begin{subfigure}{\textwidth}
        \centering
        \includegraphics[height=1.5in]{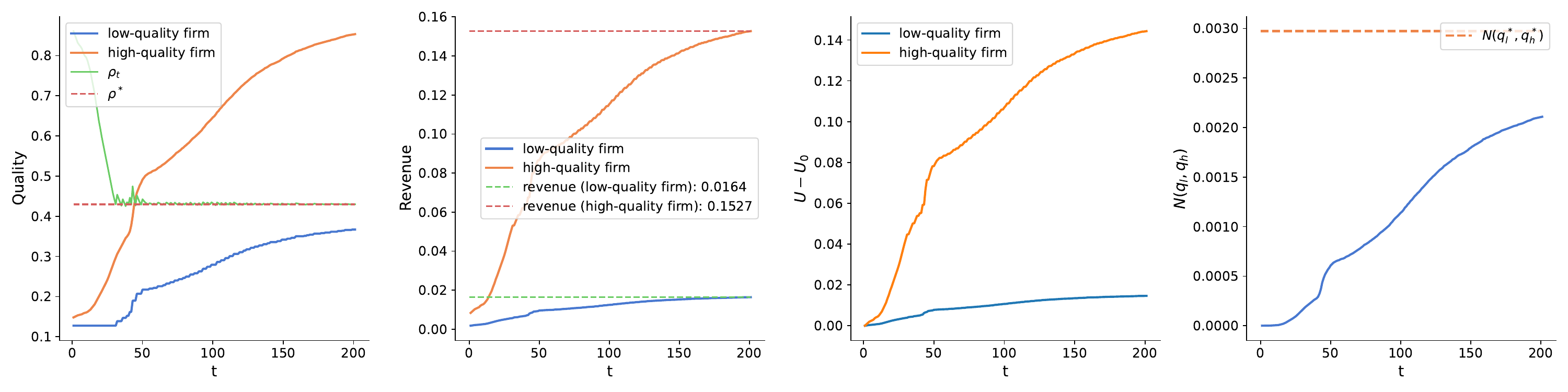}
        \label{subfig:mnist nash sharing}
    \end{subfigure}
    \caption{Performance of Defection-Free FL on MNIST. Both firms' qualities increase (figure 1), their revenues increase and approach a higher level than under One-sided Collaboration (figure 2), and the firms' qualities approach the Nash bargaining solution (figure 4).}\label{fig:nash sharing utilities}
\end{figure}
\newpage
\printbibliography
\appendix
\section{Proofs}
\subsection{Proofs for Section \ref{subsec:duopoly model}}
\begin{proof}[Proof of Lemma \ref{lemma:consumer demands}]\label{proof:consumer demands lemma}
Let $\hat{\theta}_l$ be the type of the consumer who is indifferent between buying from the low-quality firm and not buying at all. Then, based on the consumer's utility function \eqref{eq:consumer utility},
\begin{align}\label{eq:low-cutoff}
    \hat{\theta}_lq_l-p_l = 0.
\end{align}
Let $\hat{\theta}_h$ be the type of the consumer who is indifferent between buying from the high-quality firm and low-quality firm. Then, from \eqref{eq:consumer utility},
\begin{align}\label{eq:high-cutoff}
    \hat{\theta}_hq_l-p_l = \hat{\theta}_hq_h-p_h.
\end{align}
Therefore any consumer with type $\theta \in [\hat{\theta}_l, \hat{\theta}_h)$ will buy from the low-quality firm and any consumer with type $\theta \in [\hat{\theta_h}, 1]$ will buy from the high-quality firm, giving demands $D_l=\hat{\theta}_h-\hat{\theta}_l$ and $D_h=1-\hat{\theta_h}$. Solving \eqref{eq:low-cutoff} and \eqref{eq:high-cutoff} for $\hat{\theta}_l$ and $\hat{\theta}_h$ completes the proof.
\end{proof}

\begin{proof}[Proof of Lemma \ref{lemma:firm's price-optimal utility}]\label{proof:firm's price-optimal utility lemma}
From Lemma \ref{lemma:consumer demands}, the demand for the low-quality firm is $D_l = \frac{p_h-p_l}{q_h-q_l}-\frac{p_l}{q_l}$, yielding low-quality firm utility
\begin{align}\label{eq:low-utility}
    U_l = p_l\bigg(\frac{p_h-p_l}{q_h-q_l}-\frac{p_l}{q_l}\bigg). 
\end{align}
To maximize its utility, the low-quality firm sets price
\begin{align}
    p_l^* 
    &= \argmax_{p_l}\frac{\partial U_l}{\partial p_l}\\
    &= \argmax_{p_l} \bigg(\frac{p_h-2p_l}{q_h-q_l} - \frac{2p_l}{q_l}\bigg) \\
    &=
    \frac{q_lp_h}{2q_h}\label{eq:opt-low-price}.
\end{align}
Similarly, demand for the high-quality firm is $D_h = 1-\frac{p_h-p_l}{q_h-q_l}$, yielding high-quality firm utility
\begin{align}\label{eq:high-utility}
    U_h = p_h\bigg(1-\frac{p_h-p_l}{q_h-q_l}\bigg).
\end{align}
To maximize its utility, the high-quality firm sets price
\begin{align}
    p_h^*
    &=
    \argmax_{p_h}\frac{\partial U_h}{\partial p_h}\\
    &=
    \argmax_{p_h}\bigg(1-\frac{2p_h-p_l}{q_h-q_l}\bigg)\\
    &=
    \frac{p_l+(q_h-q_l)}{2}\label{eq:opt-high-price}.
\end{align}
Resolving \eqref{eq:opt-low-price} and \eqref{eq:opt-high-price} yields
\begin{align}\label{eq:resolve-opt-low-price}
    p_l^* = \frac{q_l(q_h-q_l)}{4q_h-q_l}
\end{align}
and
\begin{align}\label{eq:resolve-opt-high-price}
    p_h^* = \frac{2q_h(q_h-q_l)}{4q_h-q_l}.
\end{align}
Finally, evaluating \eqref{eq:low-utility} and \eqref{eq:high-utility} at the optimal prices \eqref{eq:resolve-opt-low-price} and \eqref{eq:resolve-opt-high-price} yields the price-optimal utilities \eqref{eq:price-optimal utilities}.
\end{proof}

\begin{proof}[Proof of Proposition \ref{prop:relationship between U and q}]\label{proof:relationship between U and q}
The proposition follows from observing the partial derivatives of the firms' utility functions. For $q_l \leq q_h$,
\begin{align}
    \frac{\partial U_h}{\partial q_h} = \frac{4q_h(4q_h^2-3q_hq_l+2q_l^2)}{(4q_h-q_l)^3}
    \geq 
    0,
\end{align}
\begin{align}
    \frac{\partial U_l}{\partial q_h} = \frac{q_l^2(2q_h+q_l)}{(4q_h-q_l)^3}
    \geq
    0,
\end{align}
\begin{align}
    \frac{\partial U_l}{\partial q_l}
    &=
    \frac{q_h^2(4q_h-7q_l)}{(4q_h-q_l)^3}
    \begin{cases}
        \geq 0 & \text{if } q_l \leq \frac{4}{7}q_h\\
        < 0 & \text{if } q_l > \frac{4}{7}q_h
    \end{cases}
\end{align}
and
\begin{align}
    \frac{\partial U_h}{\partial q_l}
    &=
    -\frac{4q_h^2(2q_h+q_l)}{(4q_h-q_l)^3}
    \leq
    0.
\end{align}
Figure \ref{fig:utilities} provides a graphical representation of this proposition.
\begin{figure}[t!]
    \centering
    \includegraphics[height=3in]{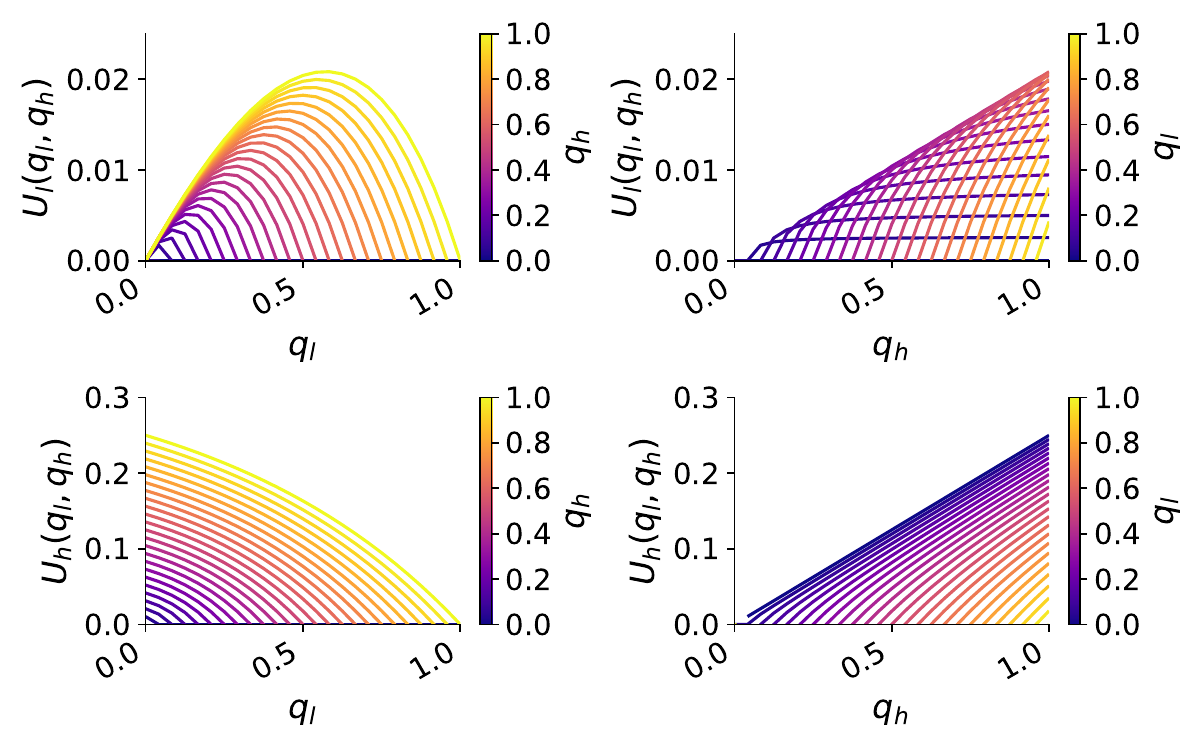}
    \caption{This figure shows how the firms' utilities vary with model quality. $U_l$ and $U_h$ are both increasing in $q_h$, $U_h$ is decreasing in $q_l$, and $U_l$ is increasing in $q_l$ for $q_l \leq \frac{4q_h}{7}$ and decreasing in $q_l$ otherwise.}\label{fig:utilities}
\end{figure}
\end{proof}

\subsection{Proofs for Section \ref{subsec:theory and analysis}}
\begin{proof}[Proof of Proposition \ref{prop:non-decreasing utilities}]\label{proof:non-decreasing utilities prop}
Suppose that at round $t$, given current qualities $q_{l,t-1}$ and $q_{h,t-1}$, the high-quality firm improves to $q_{h,t}$. Then, in order for neither firm to lose revenue, $q_{l,t}$ must be such that
\begin{align}
    \frac{4q_{h,t}^2(q_{h,t}-q_{l,t})}{(4q_{h,t}-q_{l,t})^2} 
    &\geq 
    \frac{4q_{h,t-1}^2(q_{h,t-1}-q_{l,t-1})}{(4q_{h,t-1}-q_{l,t-1})^2}\label{eq:high utility increase}
\end{align}
and
\begin{align}
    \frac{q_{l,t}q_{h,t}(q_{h,t}-q_{l,t})}{(4q_{h,t}-q_{l,t})^2}
    &\geq
    \frac{q_{l,t-1}q_{h,t-1}(q_{h,t-1}-q_{l,t-1})}{(4q_{h,t-1}-q_{l,t-1})^2}\label{eq:low utility increase}.
\end{align}
Rearranging terms, \eqref{eq:high utility increase} can be written as an inequality involving a convex quadratic of $q_{l,t}$:
\begin{align}
    &[4q_{h,t-1}^2(q_{h,t-1}-q_{l,t-1})]q_{l,t}^2\\
    &\phantom{{}=1} + 
    [4(4q_{h,t-1}-q_{l,t-1})^2q_{h,t}^2-32q_{h,t-1}^2(q_{h,t-1}-q_{l,t-1})q_{h,t}]q_{l,t}\\
    &\phantom{{}=1} +
    [64q_{h,t-1}^2(q_{h,t-1}-q_{l,t-1})q_{h,t}^2 - 4(4q_{h,t-1}-q_{l,t-1})^2q_{h,t}^3]< 0.
\end{align}
The right-most root of this quadratic is
\begin{align}
    q^h_{l,t} = 
    4q_{h,t} - \frac{(4-\rho_{t-1})^2}{2(1-\rho_{t-1})}
    \bigg(\frac{q_{h,t}^2}{q_{h,t-1}}-\sqrt{\frac{q_{h,t}^4}{q_{h,t-1}^2}-\frac{12(1-\rho_{t-1})}{(4-\rho_{t-1})^2}\frac{q_{h,t}^3}{q_{h,t-1}}}\bigg).
\end{align}
Similarly, \eqref{eq:low utility increase} can be written as an inequality involving a convex quadratic of $q_{l,t}$:
\begin{align}
    &[q_{l,t-1}q_{h,t-1}(q_{h,t-1}-q_{l,t-1})+(4q_{h,t-1}-q_{l,t-1})^2q_{h,t}]q_{l,t}^2\\
    &\phantom{{}=1} + 
    [-8q_{l,t-1}q_{h,t-1}(q_{h,t-1}-q_{l,t-1})q_{h,t}-(4q_{h,t-1}-q_{l,t-1})^2q_{h,t}^2]q_{l,t}\\
    &\phantom{{}=1} +
    [16q_{l,t-1}q_{h,t-1}(q_{h,t-1}-q_{l,t-1})q_{h,t}^2]< 0.
\end{align}
The right-most root of this quadratic is
\begin{align}
    q^l_{l,t} = 
    \frac{8(1-\rho_{t-1})\rho_{t-1}q_{h,t-1}+(4-\rho_{t-1})^2q_{h,t}+(4-\rho_{t-1})\sqrt{(4-\rho_{t-1})^2q_{h,t}^2-48\rho_{t-1}(1-\rho_{t-1})q_{h,t-1}q_{h,t}}}{2((1-\rho_{t-1})\rho_{t-1}q_{h,t-1}+(4-\rho_{t-1})^2q_{h,t})}.
\end{align}
It can be verified with graphing software that for all feasible parameters, $q_{l,t}^h \leq q_{l,t}^l$. Therefore, the low-quality firm can improve its quality to at most
\begin{align}
    \hat{q}_{l,t}
    &=
    4q_{h,t} - \frac{(4-\rho_{t-1})^2}{2(1-\rho_{t-1})}
    \bigg(\frac{q_{h,t}^2}{q_{h,t-1}}-\sqrt{\frac{q_{h,t}^4}{q_{h,t-1}^2}-\frac{12(1-\rho_{t-1})}{(4-\rho_{t-1})^2}\frac{q_{h,t}^3}{q_{h,t-1}}}\bigg),
\end{align}
before at least one of the firms loses revenue. Algorithm \ref{alg:defection-free cl} ensures that $q_{l,t}$ does not exceed $\hat{q}_{l,t}$. 

It remains to prove that there exist learning rate sequences $\{\alpha_{l,t}\}_t$ and $\{\alpha_{h,t}\}_t$ that respect the constraint $q_{l,t} \leq \hat{q}_{l,t}$. Since improvement in $q_{h}$ increases the revenues of both firms (Prop. \ref{prop:relationship between U and q}), the high-quality firm can set any learning rate schedule $\{\alpha_{h,t}\}_t$ without violating the no-revenue-loss constraints \eqref{eq:high utility increase} and \ref{eq:low utility increase}. For the low-quality firm's learning rate schedule, note that $f_{l}(x)$, as the average of convex functions $f(x;l)$ and $f(x;h)$, is also convex. Therefore,
\begin{align}
    f_{l,t}
    &\geq
    f_{l,t-1} + \nabla_{x_{l,t-1}} f_{l,t-1}^T(x_{l,t}-x_{l,t-1})\\
    &=
    f_{l,t-1} - \alpha_{l,t}\|\nabla_{x_{l,t-1}} f_{l,t-1}\|^2.
\end{align}
Rearranging terms,
\begin{align}
    \alpha_{l,t}
    &\geq
    \frac{f_{l,t-1}-f_{l,t}}{\|\nabla_{x_{l,t-1}} f_{l,t-1}\|^2}\\
    &=
    \frac{q_{l,t}-q_{l,t-1}}{\|\nabla_{x_{l,t-1}} f_{l,t-1}\|^2}.
\end{align}
Therefore, setting $\alpha_{l,t} = \min\bigg\{\frac{\hat{q}_{l,t}-q_{l,t-1}}{\|\nabla_{x_{l,t-1}} f_{l,t-1}\|^2}, 1\bigg\}$ ensures that the low-quality firm's updated quality $q_{l,t}$ does not exceed $\hat{q}_{l,t}$.
\end{proof}

\begin{proof}[Proof of Proposition \ref{prop:sufficient conditions for convergence}]\label{proof:sufficient conditions for convergence prop}
We handle the proof in cases.
\paragraph{Case 1:} $q_{l,0} \leq q_l^*$ and $\rho_0 \geq \rho^*$.

When $\frac{q_{l,t-1}}{q_{h,t}} \geq \rho^*$, the low-quality firm does not update (line 7 of Alg. \ref{alg:defection-free cl}). Once the high-quality firm improves sufficiently so that $\frac{q_{l,t}}{q_{h,t}}=\rho^*$ (note that such a $t$ exists if $q_{l,0} \leq q_l^*$), then convergence is guaranteed. To see this, we use the following lemma.

\begin{lemma}\label{lemma:increasing ratios}
$B(a,b) \geq a$ for all $b \geq 1$. (See Figure \ref{fig:ratio} for pictorial proof.)
\end{lemma}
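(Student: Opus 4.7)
The plan is to reduce the inequality $B(a,b)\ge a$ to two pieces: a boundary identity $B(a,1)=a$ and a monotonicity statement that $b\mapsto B(a,b)$ is non-decreasing on $[1,\infty)$. The lemma then follows immediately from $B(a,b)\ge B(a,1)=a$. Throughout, I work with $a=\rho_{t-1}\in[0,1)$ and write $c\overset{\text{def}}{=}\frac{12(1-a)}{(4-a)^2}$, so that $B(a,b)=4-\frac{(4-a)^2}{2(1-a)}\bigl(b-\sqrt{b^2-cb}\bigr)$. Note that $c$ attains its maximum at $a=0$, where $c=3/4$, so $c<1\le b$ throughout and in particular $b^2-cb>0$ and $2b-c>0$.

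For the boundary identity, I would substitute $b=1$ and simplify the radical. The key algebraic observation is
\begin{equation}
1-c=\frac{(4-a)^2-12(1-a)}{(4-a)^2}=\frac{a^2+4a+4}{(4-a)^2}=\left(\frac{a+2}{4-a}\right)^{2},
\end{equation}
so $\sqrt{1-c}=\frac{a+2}{4-a}$ and hence $1-\sqrt{1-c}=\frac{2(1-a)}{4-a}$. Substituting back, the prefactor $\frac{(4-a)^2}{2(1-a)}$ cancels against this factor, and one obtains $B(a,1)=4-(4-a)=a$. This is also a useful sanity check on the definition of $B$: if the high-quality firm does not improve ($b=1$), then by Proposition \ref{prop:relationship between U and q} any strict increase in $q_l$ would strictly decrease $U_h$, so the admissible ceiling must satisfy $\hat q_{l,t}=a\,q_{h,t}=q_{l,t-1}$, exactly as the identity says.

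For monotonicity, write $g(b)\overset{\text{def}}{=}b-\sqrt{b^2-cb}$, so that $B(a,b)=4-\frac{(4-a)^2}{2(1-a)}\,g(b)$ and the prefactor is positive. It therefore suffices to show $g$ is non-increasing on $[1,\infty)$. Differentiating, $g'(b)=1-\frac{2b-c}{2\sqrt{b^2-cb}}$, and $g'(b)\le 0$ is equivalent, after clearing the positive denominator, to $2\sqrt{b^2-cb}\le 2b-c$. Since both sides are positive, squaring reduces this to $4(b^2-cb)\le (2b-c)^2=4b^2-4bc+c^2$, i.e.\ $0\le c^2$, which is immediate.

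Combining the two pieces gives $B(a,b)\ge B(a,1)=a$ for all $b\ge 1$, as required. The only mildly clever step is the simplification $1-c=\bigl(\tfrac{a+2}{4-a}\bigr)^2$; once this identity is in hand, both the boundary evaluation and the derivative computation are one-line calculations, so I do not expect a substantive obstacle. As Lemma \ref{lemma:increasing ratios} feeds directly into the Case 1 analysis of Proposition \ref{prop:sufficient conditions for convergence} — it is what guarantees that, once $\rho_t$ meets $\rho^*$ from above, subsequent low-quality updates cannot drive the ratio back below $\rho^*$ — I would state it precisely in this $B(a,1)=a$, monotone-in-$b$ form so it can be cited directly when propagating the invariant $\rho_t\ge\rho^*$ through later rounds.
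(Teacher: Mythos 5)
Your proof is correct, and it takes a genuinely different — and more rigorous — route than the paper, which offers no analytical argument at all: the paper's entire ``proof'' of Lemma \ref{lemma:increasing ratios} is the pictorial evidence in Figure \ref{fig:ratio}. Your decomposition into the boundary identity $B(a,1)=a$ plus monotonicity of $b\mapsto B(a,b)$ on $[1,\infty)$ is clean and checks out: the identity $1-c=\bigl(\tfrac{a+2}{4-a}\bigr)^2$ with $c=\tfrac{12(1-a)}{(4-a)^2}$ is correct and makes $B(a,1)=4-(4-a)=a$ immediate; the claim $c\le 3/4<1\le b$ holds because $c$ is decreasing in $a$ on $[0,1)$ with value $3/4$ at $a=0$, so the radicand $b^2-cb$ and the quantity $2b-c$ are both positive; and the reduction of $g'(b)\le 0$ to $0\le c^2$ after squaring is valid precisely because both sides of $2\sqrt{b^2-cb}\le 2b-c$ are positive. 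Since the prefactor $\tfrac{(4-a)^2}{2(1-a)}$ is positive for $a\in[0,1)$ and $g$ is non-increasing, $B(a,\cdot)$ is non-decreasing and the lemma follows. What your approach buys is substantial: it replaces a graphing-software verification with a short closed-form argument, and as a bonus the boundary identity $B(a,1)=a$ gives the economically sensible sanity check (no improvement by the high firm means no admissible improvement by the low firm) and the monotonicity statement is exactly the form needed to propagate the invariant $\rho_t\ge\rho^*$ in Case 1 of Proposition \ref{prop:sufficient conditions for convergence}. The only caveat is the implicit restriction to $a<1$, which is harmless since $\rho_{t-1}=1$ corresponds to the degenerate equal-quality case excluded throughout.
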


\begin{figure}[t!]
    \centering
    \includegraphics[height=2.5in]{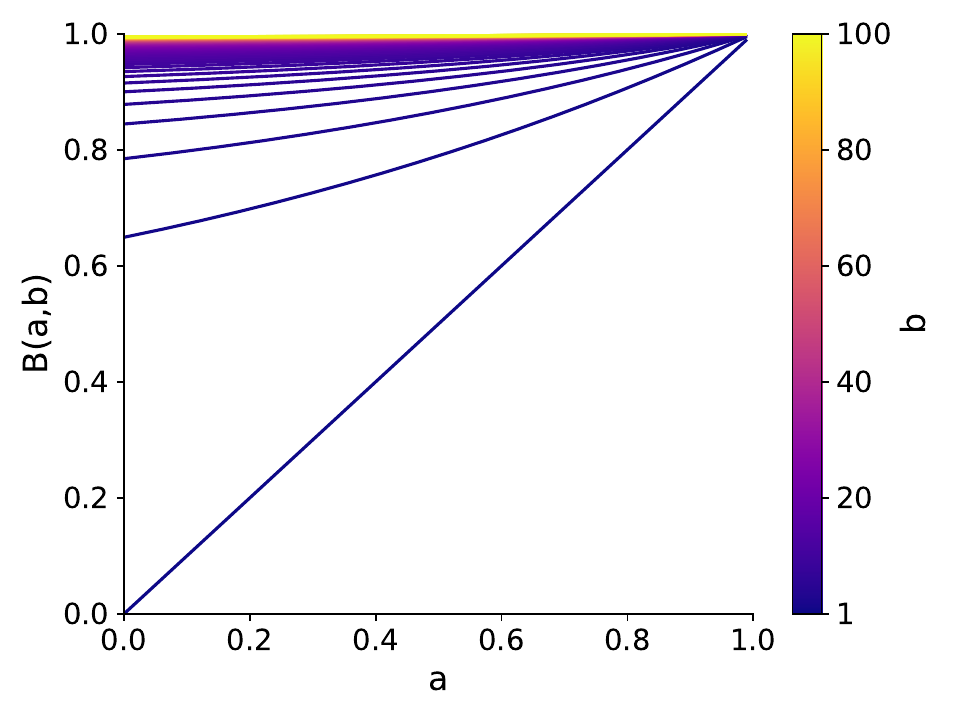}
    \caption{$B(a,b) \geq a$ for all $b \geq 1$.}\label{fig:ratio}
\end{figure}

Consider step $t+1$ at which $\rho_{t}=\frac{q_{l,t}}{q_{h,t}}=\rho^*$. Given the high-quality firm's improvement $q_{h,t} \rightarrow q_{h,t+1}$, if the low-quality firm improves to $q_{l,t+1}=\hat{q}_{l,t+1}$, by Lemma \ref{lemma:increasing ratios}, $\rho_{t+1} \geq \rho_{t}$. Therefore the low-quality firm can always improve to some level $q_{l,t+1} \in [q_{l,t}, \hat{q}_{l,t+1}]$ and ensure that $\rho_{t+1} = \rho^*$ with neither firm losing revenue. Maintaining this improvement schedule, once the high-quality firm improves to $q_h^*$ (using any sequence of learning rates $\{\alpha_{h,t}\}_t$), the low-quality firm will be able to reach $q_l^*$ by observing the constraint in lines 9-11 of Alg. \ref{alg:defection-free cl}.

\paragraph{Case 2:} $q_{l,0} \leq q_l^*$ and $\rho_0 < \rho^*$.

Our strategy for this case will be to show there exist sequences of learning rates $\{\alpha_{h,t}\}_t$ and $\{\alpha_{l,t}\}_t$ such that $\sum_{t=1}^T(\rho_t-\rho_{t-1}) = \rho_T-\rho_0 \geq \rho^* - \rho_0$. We will do this by lower-bounding the quality-ratio gaps $\rho_t-\rho_{t-1} = B(\rho_{t-1},\nicefrac{q_{h,t}}{q_{h,t-1}})-\rho_{t-1}$.

For each $\rho \leq 1$, there is a point (possibly infinite)
\begin{align}
    b_{\rho} 
    &\overset{\text{def}}{=} 
    \max\{b \geq 1: (4-5\rho)\log_{10} b \leq B(\rho, b) - \rho\}\}.
\end{align}
That is, for a given $\rho$, $b_{\rho}$ is the point at which $(4-5\rho)\log b$ goes from being a lower to an upper bound on $B(\rho, b)-\rho$. Define $\tilde{b}$ as the smallest such point over all $\rho \leq 1$, so
\begin{align}
    \tilde{b}
    &\overset{\text{def}}{=}
    \min_{\rho \leq 1} b_{\rho}.
\end{align}
Figure \ref{fig:ratio bounds} plots $b_{\rho}$ for various values of $\rho$ and shows that $\tilde{b} \approx 1.03 = b_{\rho \approx 0.33}$.

\begin{figure}[t!]
    \centering
    \includegraphics[height=2.5in]{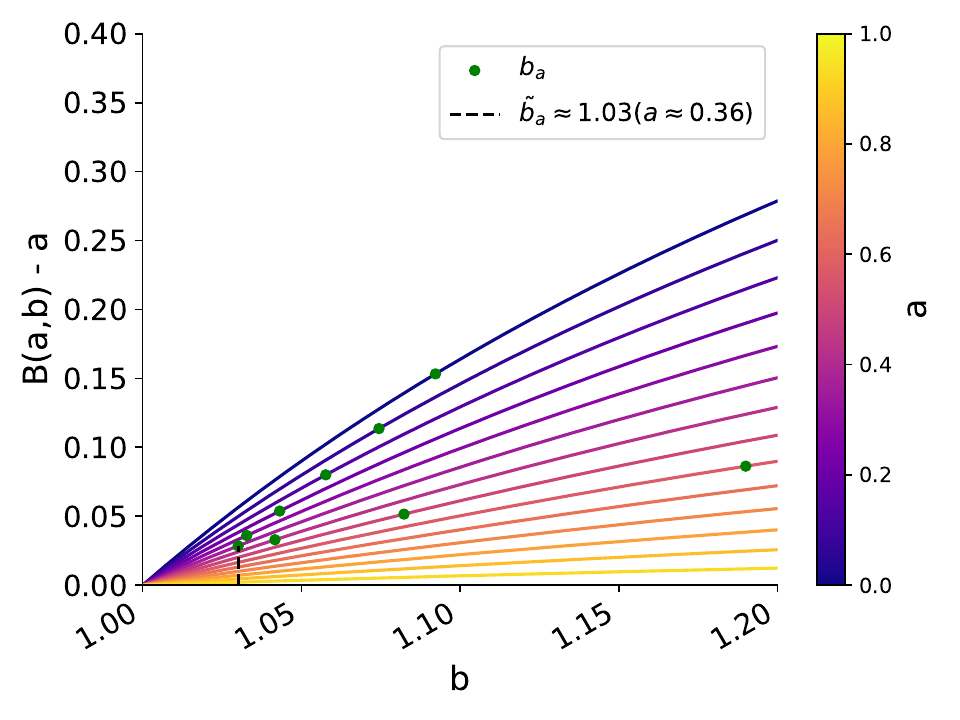}
    \caption{The green dots indicate, for a given $\nicefrac{q_{l,t-1}}{q_{h,t-1}}$ (symbolized by $a$), the upper bound on $\nicefrac{q_{h,t}}{q_{h,t-1}}$ that ensures convergence to the Nash bargaining solution.}\label{fig:ratio bounds}
\end{figure}

By definition of $\tilde{b}$, $(4-5\rho)\log_{10}b \leq B(\rho,b)-\rho$ for any $\rho \leq 1$ and $b \leq \tilde{b}$. Suppose the high-quality firm maintains a learning rate schedule $\{\alpha_{h,t}\}_t$ such that $\nicefrac{q_{h,t}}{q_{h,t-1}} \leq \tilde{b}$ for all $t$ and $T$ is such that $q_h^*-q_{h,T} \leq \epsilon$. Then 
\begin{align}
    \sum_{t=1}^T(\rho_t-\rho_{t-1})
    &=
    \sum_{t=1}^T (B(\rho_{t-1},\nicefrac{q_{h,t}}{q_{h,t-1}}) - \rho_{t-1})\\
    &\overset{(i)}{\geq}
    \sum_{t=1}^T(4-5\rho_{t-1})\log_{10}(\nicefrac{q_{h,t}}{q_{h,t-1}})\\
    &\overset{(ii)}{\geq}
    (4-5\rho^*)\log_{10}(\nicefrac{q_{h,T}}{q_{h,0}})\\
    &\geq
    (4-5\rho^*)\log_{10}(\nicefrac{(q_h^*-\epsilon)}{q_{h,0}}) \label{eq:lower bound on ratio gaps},
\end{align}
where $(i)$ is due to $\nicefrac{q_{h,t}}{q_{h,t-1}} \leq \tilde{b}$, and $(ii)$ is due to 
the fact that $\rho_0 \leq \rho^*$ and Lemma \ref{lemma:increasing ratios}. 

Figure \ref{fig:bounds for convergence} shows that $(4-5\rho^*)\log_{10}(\nicefrac{q_h^*}{q_{h,0}}) \geq \rho^* - \rho_0$, so
\begin{align}
    (4-5\rho^*)\log_{10}\bigg(\frac{q_h^*-\epsilon}{q_{h,0}}\bigg)
    &=
    (4-5\rho^*)\bigg(\log_{10}\bigg(\frac{q_h^*}{q_{h,0}}\bigg)-\log_{10}\bigg(\frac{q_h^*}{q_h^*-\epsilon}\bigg)\bigg)\\
    &\geq
    (\rho^*-\rho_0) - (4-5\rho^*)\log_{10}\bigg(\frac{q_h^*}{q_h^*-\epsilon}\bigg).
\end{align}
Therefore $\rho^* - \rho_T \leq (4-5\rho^*)\log_{10}\bigg(\frac{q_h^*}{q_h^*-\epsilon}\bigg)$.

\begin{figure}[t!]
    \centering
    \includegraphics[height=2.5in]{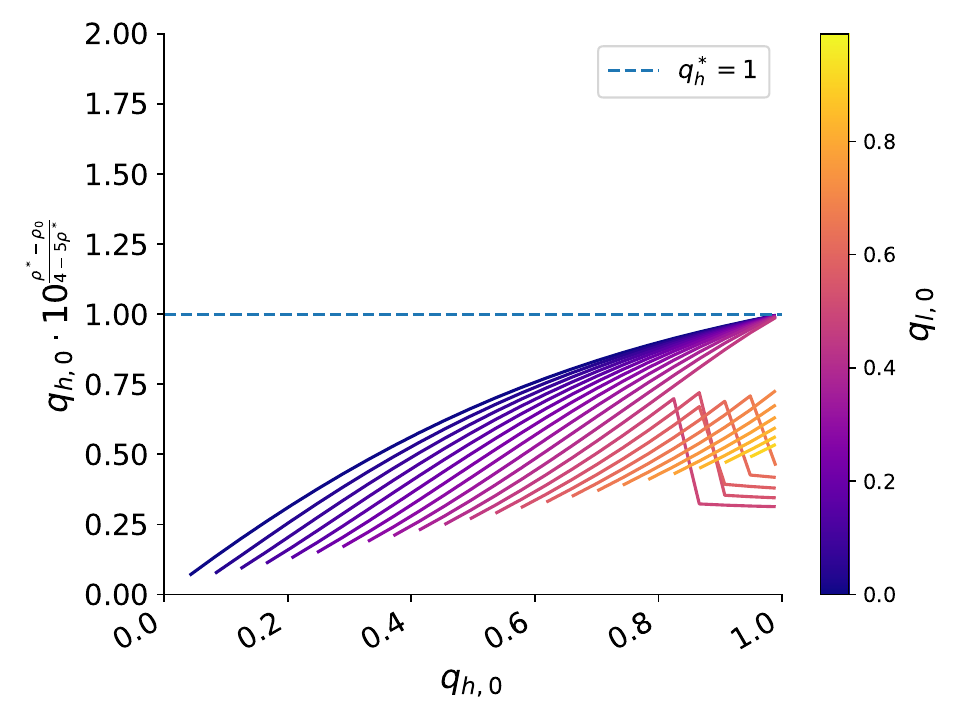}
    \caption{Empirical verification of the inequality: $(4-5\rho^*)\log_{10}(\nicefrac{q_h^*}{q_{h,0}}) \geq \rho^* - \rho_0$}\label{fig:bounds for convergence}
\end{figure}

It remains to show that there exists a sequence of learning rates $\{\alpha_{h,t}\}_t$ such that $\nicefrac{q_{h,t}}{q_{h,t-1}} \leq \tilde{b}$, and $T$ such that $q_h^*-q_{h,T} \leq \epsilon$. Let $\alpha_{h,t} = \min\bigg\{\frac{(\tilde{b}-1)q_{h,t-1}}{\|\nabla_{x_{h,t-1}} f_{h,t-1}\|^2}, \frac{1}{L}\bigg\}$. We analyze what happens when $\alpha_{h,t}$ is each of the values in the \emph{min} expression.

First, suppose $\alpha_{h,t} = \frac{(\tilde{b}-1)q_{h,t-1}}{\|\nabla_{x_{h,t-1}} f_{h,t-1}\|^2}$ for all $t$. $f_h$, as the average of $L$-smooth and convex functions, is also $L$-smooth and convex, so that
\begin{align}
    \frac{q_{h,t-1}+\frac{\alpha_{h,t}}{2}\|\nabla_{x_{h,t-1}} f_{h,t-1}\|^2}{q_{h,t-1}}
    \leq
    \frac{q_{h,t}}{q_{h,t-1}}
    \leq
    \frac{q_{h,t-1}+\alpha_{h,t}\|\nabla_{x_{h,t-1}} f_{h,t-1}\|^2}{q_{h,t-1}}.
\end{align}
Therefore, the choice of $\alpha_{h,t}$ guarantees that $\frac{\tilde{b}+1}{2} \leq \frac{q_{h,t}}{q_{h,t-1}} \leq \tilde{b}$, giving $\frac{q_{h,T}}{q_{h,0}} \geq \bigg(\frac{\tilde{b}+1}{2}\bigg)^T$. From this we see that setting $T \geq \frac{\log(\nicefrac{q_h^*}{q_{h,0}})}{\log(\nicefrac{(\tilde{b}+1)}{2})}$ guarantees convergence to $q_h^*$ in $T'$ steps. 

Now suppose $\alpha_{h,t}=\frac{1}{L}$ for all $t$. Under this condition, standard convergence analysis for gradient descent on convex and $L$-smooth functions gives
\begin{align}
    f_{h,T}-f_h^*
    &\leq
    \frac{L\|x_{h,0}-x_h^*\|^2}{2T}.
\end{align}
Therefore, $f_{h,T}-f_h^* \leq \epsilon$ after $T=\frac{L\|x_{h,0}-x_h^*\|^2}{2\epsilon}$ rounds. 

From the above analysis, we see that after at most $T=\frac{\log(\nicefrac{q_h^*}{q_{h,0}})}{\log(\nicefrac{(\tilde{b}+1)}{2})} + \frac{L\|x_{h,0}-x_h^*\|^2}{2\epsilon}$ rounds, $f_{h,T}-f_h^* = q_h^*-q_{h,T} \leq \epsilon$, completing the proof. 
\end{proof}

\begin{proof}[Proof of Theorem \ref{thm:convergence rate}]\label{proof:convergence rate theorem}
By Taylor's theorem,
\begin{align}
    N(q_l^*,q_h^*)
    &\leq
    N(q_{l,T},q_{h,T}) +
    \frac{\partial N(q_l,q_h)}{\partial q_l}(q_l^*-q_{l,T}) +
    \frac{\partial N(q_l,q_h)}{\partial q_h}(q_h^*-q_{h,T})\\
    &\phantom{{}=1} +
    \bigg(\max_{q_l,q_h}\frac{\partial^2 N(q_l,q_h)}{\partial q_l^2}\bigg)
    \frac{(q_l^*-q_{l,T})^2}{2} + 
    \bigg(\max_{q_l,q_h}\frac{\partial^2 N(q_l,q_h)}{\partial q_h^2}\bigg)\frac{(q_h^*-q_{h,T})^2}{2} \\
    &\phantom{{}=1} + 
    \bigg(\max_{q_l,q_h}\frac{\partial^2 N(q_l,q_h)}{\partial q_h \partial q_l}\bigg)(q_l^*-q_{l,T})(q_h^*-q_{h,T})\\
    &\overset{(i)}{\leq}
    c_1(q_h^*-q_{h,T}) +
    c_2(\rho^*(q_h^*-q_{h,T}) + q_{h,T}|\rho^*-\rho_T|)\\
    &\lesssim
    (q_h^* - q_{h,T}) + |\rho^*-\rho_T|,
\end{align}
where $(i)$ follows from the fact that the gradients of $N$ are bounded by small constants (can be verified with graphing software), qualities $q \in [0,1]$, and $q_l^*-q_{l,T} = \rho^*q_h^* - \rho_Tq_{h,T} \leq \rho^*(q_h^*-q_{h,T}) + q_{h,T}|\rho^* - \rho_T|$.

We now bound $q_h^*-q_{h,T}$. Note that $f_h$, as the average of $L$-smooth and convex functions, is also $L$-smooth and convex. Therefore,
\begin{align}
    f_{h,t}
    &\overset{(i)}{\leq}
    f_{h,t-1} + \bigg(-\alpha_{h,t} + \frac{L\alpha_{h,t}^2}{2}\bigg)\|\nabla_{x_{h,t-1}} f_{h,t-1}\|^2\\
    &\overset{(ii)}{\leq}
    f_{h,t-1} - \frac{\alpha_{h,t}}{2}\|\nabla_{x_{h,t-1}} f_{h,t-1}\|^2\\
    &\overset{(iii)}{\leq}
    f_h^* + \nabla_{x_{h,t-1}} f_{h,t-1}^T(x_{h,t-1}-x_h^*) - \frac{\alpha_{h,t}}{2}\|\nabla_{x_{h,t-1}} f_{h,t-1}\|^2\\
    &=
    f_h^* + \frac{2}{\alpha_{h,t}}(\|x_{h,t-1}-x_h^*\|^2 - \|x_{h,t}-x_h^*\|^2),
\end{align}
where $(i)$ is due to $L$-smoothness of $f_{h}$, $(ii)$ is due to $\alpha_{h,t} \leq \frac{1}{L}$, and $(iii)$ is due to convexity of $f_{h}$. Rearranging terms and summing over $t$,
\begin{align}
    \sum_{t=1}^T \frac{\alpha_{h,t}}{2}(f_{h,t}-f_h^*)
    &\leq
    \sum_{t=1}^T \|x_{h,t-1}-x_h^*\|^2 - \|x_{h,t}-x_h^*\|^2\\
    &\leq
    \|x_{h,0}-x_h^*\|^2\label{eq:sum of iterate gaps}.
\end{align}
Since $\{f_{h,t}\}_t$ are decreasing, \eqref{eq:sum of iterate gaps} implies that
\begin{align}
    f_{h,T} - f_h^*
    &\leq
    \frac{2\|x_{h,0}-x_h^*\|^2}{\sum_{t=1}^T \alpha_{h,t}}.
\end{align}
Noting that $f_{h,T}-f_h^* = q_h^* - q_{h,T}$ completes the proof.
\end{proof}

\begin{proof}[Proof of Corollary \ref{cor:simplified convergence bound}]\label{proof:simplified convergence bound corollary}
Due to Theorem \ref{thm:convergence rate}, showing that $|\rho^*-\rho_T| \leq (4-5\rho^*)\log\bigg(\frac{q_h^*}{q_h^*-\epsilon}\bigg)$ if $T \gtrsim \frac{L\|x_{h,0}-x_h^*\|^2}{\epsilon}$ completes the proof. We handle it in the same cases as in the proof of Proposition \ref{prop:sufficient conditions for convergence}. 
\paragraph{Case 1:} $\rho_0 \geq \rho^*$. From lines 9-11 of Algorithm \ref{alg:defection-free cl}, the low-quality firm will not update its model until after round $T$, where $\rho_{T}=\rho^*$. With only the high-quality firm updating before this point, the firms' qualities will have reached a ratio $\rho^*$ by $T$ steps if $\frac{q_{l,0}}{q_{h,T}} = \rho^*$. Dividing both sides of this equation by $q_{h,0}$ and rearranging terms, $\frac{q_{h,T}}{q_{h,0}} = \frac{\rho_0}{\rho^*}$. As we showed for this case in the proof of Proposition \ref{prop:sufficient conditions for convergence}, $\frac{q_{h,t}}{q_{h,t-1}} \leq \tilde{b}$. Therefore,
\begin{align}
    \frac{q_{h,T}}{q_{h,0}}
    &=
    \frac{\rho_0}{\rho^*}
    \leq
    \tilde{b}^{T},
\end{align}
which gives $T \geq \frac{\log({\nicefrac{\rho_0}{\rho^*}})}{\log(\tilde{b})}$. That is, after $\frac{\log({\nicefrac{\rho_0}{\rho^*}})}{\log(\tilde{b})}$ steps, $\rho_{T} = \rho^*$. As discussed in the proof of Proposition \ref{prop:sufficient conditions for convergence}, the firms can maintain a quality ration of $\rho^*$ for all future rounds, making $|\rho^*-\rho_T|=0$.
\paragraph{Case 2:} $\rho_0 < \rho^*$. As the proof of this case in Proposition \ref{prop:sufficient conditions for convergence} directly shows, $\rho^* - \rho_{T} \leq (4-5\rho^*)\log\bigg(\frac{q_h^*}{q_h^*-\epsilon}\bigg)$ if $T \geq \frac{\log(\nicefrac{q_h^*}{q_{h,0}})}{\log(\nicefrac{(\tilde{b}+1)}{2})} + \frac{L\|x_{h,0}-x_h^*\|^2}{2\epsilon}$. 
\newline
\newline
Combining Cases 1 and 2, if $T \geq \max\bigg\{\frac{\log({\nicefrac{\rho_0}{\rho^*}})}{\log(\tilde{b})}, \frac{\log(\nicefrac{q_h^*}{q_{h,0}})}{\log(\nicefrac{(\tilde{b}+1)}{2})} + \frac{L\|x_{h,0}-x_h^*\|^2}{2\epsilon}\bigg\}$, then $|\rho^*-\rho_T| \leq (4-5\rho^*)\log\bigg(\frac{q_h^*}{q_h^*-\epsilon}\bigg)$, which completes the proof.
\end{proof}

The following lemma gives. 
\begin{lemma}\label{lemma:upper bound on rho^*} For all $\rho_0$ s.t. $\rho_0 \leq \rho^*$, 
$\rho^* \leq 0.43$.
\end{lemma}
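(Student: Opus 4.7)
The plan is to reduce the Nash bargaining problem to a one-dimensional optimization, extract a clean first-order condition, and then pin down $\rho^*$ by combining an exact limiting computation with numerical verification over the admissible region.

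First I exploit the earlier proposition that $q_h^* = \max_{x \in \mathcal{X}} q(x)$ to eliminate $q_h$ from the optimization: after normalizing $q_h^* = 1$, Nash bargaining reduces to maximizing
\begin{align}
    N(\rho) = \left(\frac{\rho(1-\rho)}{(4-\rho)^2} - U_{l,0}\right)\left(\frac{4(1-\rho)}{(4-\rho)^2} - U_{h,0}\right)
\end{align}
over a single variable $\rho = q_l$. Substituting the expressions for $\partial U_l/\partial q_l$ and $\partial U_h/\partial q_l$ from the proof of Proposition \ref{prop:relationship between U and q} into $\partial N/\partial \rho = 0$ yields the clean first-order condition
\begin{align}
    (4 - 7\rho^*)\bigl(U_h(\rho^*) - U_{h,0}\bigr) = 4(2 + \rho^*)\bigl(U_l(\rho^*) - U_{l,0}\bigr).
\end{align}
Both bracketed improvement terms are non-negative at any bargaining solution (they are the very surpluses Nash bargaining must deliver), so this already gives the weak bound $\rho^* \leq 4/7 \approx 0.571$.

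Next I tighten this by analyzing the extremal case $U_{l,0} = U_{h,0} = 0$, realized in the limit $q_{h,0} \to 0$ and contained in the admissible region $\rho_0 \leq \rho^*$. The objective collapses to $4\rho(1-\rho)^2/(4-\rho)^4$, whose logarithmic derivative vanishes at the positive root of $\rho^2 + 9\rho - 4 = 0$, namely
\begin{align}
    \rho^*_{\text{free}} = \frac{\sqrt{97}-9}{2} \approx 0.4244 < 0.43.
\end{align}
So the lemma certainly holds in this limit, and my strategy is to show this limit is the worst case.

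Finally I argue that no other admissible starting condition pushes $\rho^*$ above $\rho^*_{\text{free}}$. Implicit differentiation of the first-order condition along the two-parameter family $\{(U_{l,0}(\rho_0,q_{h,0}), U_{h,0}(\rho_0,q_{h,0}))\}$ shows that $\partial \rho^*/\partial U_{l,0}$ and $\partial \rho^*/\partial U_{h,0}$ have opposite signs, so any extremum of $\rho^*$ over the admissible set lies on its boundary -- either on $q_{h,0}=0$ (the zero-disagreement case above) or on the critical curve $\rho_0 = \rho^*$, where a would-be violation $\rho^*>0.43$ forces $\rho_0>0.43$ and can be eliminated by direct inspection. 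Showing that $\rho^*$ really is bounded by $\rho^*_{\text{free}}$ on the interior, rather than merely by $4/7$, is the main obstacle; in keeping with the style used elsewhere in the appendix (cf.\ the graphing-software arguments in the proofs of Propositions \ref{prop:non-decreasing utilities} and \ref{prop:sufficient conditions for convergence}), I expect to discharge the residual case by explicit numerical verification on a dense grid in $(\rho_0,q_{h,0})$, yielding $\rho^* \leq 0.43$ as claimed.
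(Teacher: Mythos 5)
Your proposal is correct in its core and follows the same basic route as the paper: normalize $q_h^*=1$ via the earlier proposition, reduce Nash bargaining to a one-variable problem in $q_l$, write down the first-order condition, and finish with a numerical sweep over admissible starting conditions. The paper simply expands $\partial N/\partial q_l$ into a cubic over $(4-q_l)^5$ and asserts graphically that its relevant root never exceeds $0.43$; your factored form $(4-7\rho^*)(U_h-U_{h,0}) = 4(2+\rho^*)(U_l-U_{l,0})$ is algebraically equivalent (I checked that expanding it reproduces the paper's cubic, e.g.\ the leading coefficient $7U_{h,0}+\rho_0 U_{h,0}+4$) but buys you two things the paper does not state: the free analytic bound $\rho^* \le 4/7$ from non-negativity of the bargaining surpluses, and the exact zero-disagreement optimum $\rho^* = (\sqrt{97}-9)/2 \approx 0.4244$ from the quadratic $\rho^2+9\rho-4=0$, which explains where the constant $0.43$ actually comes from. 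That is genuinely more informative than the paper's proof.

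The one soft spot is the reduction of the interior to the boundary. The observation that $\partial\rho^*/\partial U_{l,0}$ and $\partial\rho^*/\partial U_{h,0}$ have opposite signs only rules out interior critical points; it does not by itself tell you which boundary piece is extremal, and your enumeration of the boundary omits the edges $\rho_0=0$ and the upper limit on $q_{h,0}$. The disagreement utilities are not free parameters: they satisfy $U_{l,0} = (\rho_0/4)\,U_{h,0}$, so the admissible set is a thin cone with $U_{h,0}$ dominating, and the correct argument is that along any ray from the origin in this cone the directional derivative of $\rho^*$ has the sign of $(4-7\rho^*)-(2+\rho^*)\rho_0$ divided by a negative quantity, which is negative precisely when $\rho_0 \le \rho^* < (\sqrt{97}-9)/2$ --- so making this fully rigorous requires a small bootstrapping or continuity argument to avoid circularity. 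You acknowledge this and fall back on a grid check, which leaves your proof at the same evidentiary standard as the paper's graphing-software argument; the lemma is established to the same (numerical) degree in both cases, with yours carrying more analytical structure along the way.
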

\begin{proof}[Proof of Lemma \ref{lemma:upper bound on rho^*}]
The Nash bargaining objective evaluated at $q_h^*=1$ is
\begin{align}\label{eq:nash obj q_h^*}
    N(q_l, q_h^*)
    &=
    \bigg(\frac{q_l(1-q_l)}{(4-q_l)^2} - U_{l,0}\bigg)
    \bigg(\frac{4(1-q_l)}{(4-q_l)^2} - U_{h,0}\bigg),
\end{align}
where $U_{h,0}\overset{\text{def}}{=}U_h(q_{l,0},q_{h,0})$ and $U_{l,0}\overset{\text{def}}{=}U_l(q_{l,0},q_{h,0})$.
Differentiating \eqref{eq:nash obj q_h^*} with respect to $q_l$,
\begin{align}\label{eq:partial q_l nash obj q_h^*}
    &\frac{\partial N(q_l, q_h^*)}{\partial q_l}\\
    &=
    \frac{(7U_{h,0} + U_{h,0}\rho_0 + 4)q_l^3 + (-60U_{h,0}-6U_{h,0}\rho_0+32)q_l^2 + (144U_{h,0}-52)q_l + (-64U_{h,0}+32U_{h,0}\rho_0+16)}{(4-q_l)^5}.
\end{align}
The roots of \eqref{eq:partial q_l nash obj q_h^*} correspond to the roots of the cubic numerator. It can be verified with graphing software that over all starting points $(q_{l,0}, q_{h,0})$ such that $\rho_0 \leq \rho^*$, the roots $q_l^*$ of this cubic  are at most $0.43$. (See Figure \ref{fig:nash max} for empirical evidence.)
\begin{figure}[t!]
    \centering
    \includegraphics[height=2.5in]{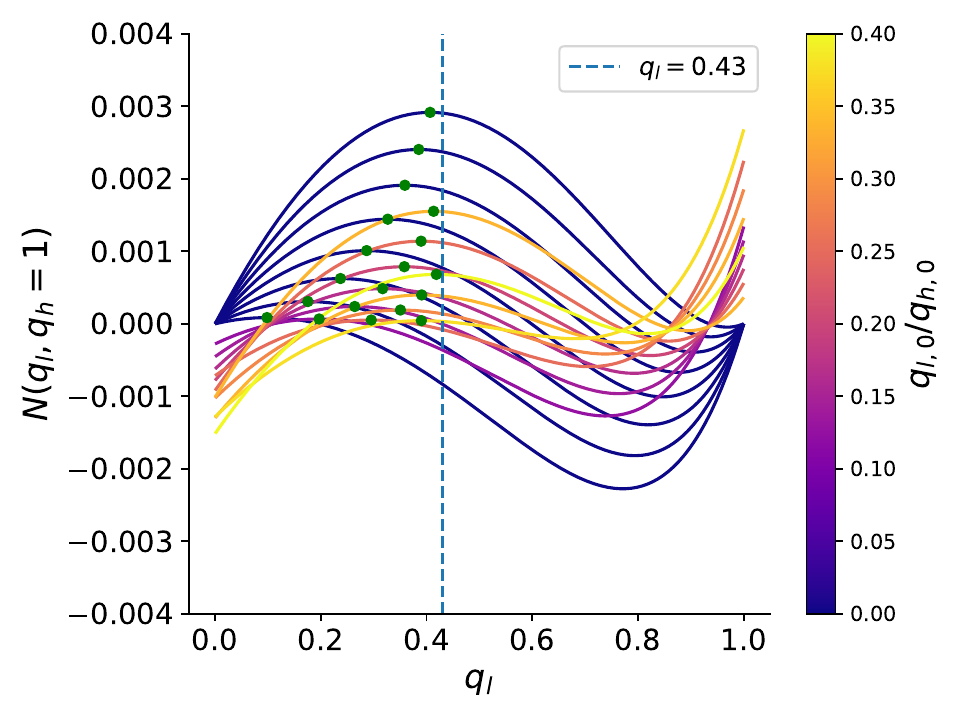}
    \caption{For a range of initial qualities and $q_h=q_h^*=1$, the green dots mark the Nash bargaining solution. The $x$-values of these points are smaller than $0.43$.}\label{fig:nash max}
\end{figure}
\end{proof}
\end{document}